\documentclass{article}
\PassOptionsToPackage{numbers,sort&compress}{natbib}
\usepackage[preprint]{neurips_2026}
\usepackage[utf8]{inputenc} 
\usepackage[T1]{fontenc}    
\usepackage{hyperref}      
\usepackage{acro}
\usepackage{url}           
\usepackage{booktabs}      
\usepackage{amsfonts}      
\usepackage{nicefrac}      
\usepackage{microtype}      
\usepackage{xcolor}        
\usepackage{todonotes}
\usepackage{amsthm}
\usepackage{mathtools}
\usepackage{placeins}

\newtheorem{proposition}{Proposition}[section]
\newtheorem{remark}{Remark}[section]
\newtheorem{definition}{Definition}[section]

\DeclareAcronym{RV}{
short = RV,
long  = \emph{Random Variable},
tag   = abbrev}
\DeclareAcronym{GMM}{
short = GMM,
long  = \emph{Gaussian Mixture Model},
tag   = abbrev}
\DeclareAcronym{PDF}{
short = PDF,
long  = \emph{Probability Density Function},
tag   = abbrev}
\usepackage{cleveref}

\title{A Geometric Gaussian Mixture Representation of Plane Curves}
\author{
  Ali Darijani\thanks{\url{https://orcid.org/0009-0000-4659-093X}}\\
  Fraunhofer IOSB\\KIT, IES\\
  \texttt{ali.darijani@rwth-aachen.de}
  \And
    Benedikt Stratmann\\
  Fraunhofer IOSB\\
  \texttt{benedikt.stratmann@iosb.fraunhofer.de}\thanks{\url{https://orcid.org/0009-0007-3414-0457}}
  \And
  Jürgen Beyerer \\
  Fraunhofer IOSB\\
  \texttt{juergen.beyerer@iosb.fraunhofer.de}\thanks{\url{https://orcid.org/0000-0003-3556-7181}}\\
  KIT, IES\\
  \texttt{juergen.beyerer@kit.edu}
}
\begin{document}
\maketitle
\begin{abstract}
We introduce a user defined probabilistic polygonal representation for plane curves. Given a curve, we select vertices on the curve and connect consecutive vertices by line segments to obtain a polygonal approximation. Each segment is equipped with a user defined uncertainty parameter (standard deviation) in the normal direction. This yields a collection of thin probabilistic geometric primitives that retain the local tangent, normal, arc length of the underlying plane curve while extending it beyond the idealized deterministic one dimensional formulation.

For each segment, we define a Random Variable that is uniform distributed in the tangent direction of the segment and Gaussian distributed in the normal direction of the segment. By matching the first and the second central moments, this construction induces a Gaussian component whose mean lies at the segment midpoint and whose covariance encodes both tangential and normal uncertainty. Combining the segment wise components with appropriate weights yields a Gaussian Mixture Model representation of the user defined probabilistic polygonal representation of the plane curve.

The proposed framework provides an analytically tractable probabilistic model that preserves local position, orientation, length scale, and uncertainty in the normal direction. It applies to smooth, closed, open, non regular, and self intersecting plane curves, allows adaptive discretization and varying uncertainty in the normal direction, and as a result supports uncertainty aware geometric modeling. Experiments on a collection of canonical plane curves show that the resulting Gaussian Mixture Model capture local tangent, local normal, and local arc length; resulting in the global shape of the underlying curves to be truthfully captured as well. The representation is particularly relevant for applications in uncertainty aware CAD and digital twins, probabilistic obstacle modeling in robotics, and probabilistic trajectory planning.
\end{abstract}
\section{Introduction}
Plane curves while the simplest geometric objects in modern geometry~\cite{lee_manifold} are fundamental in a wide range of applications, including computer vision~\cite{visionszeliski}, robotics~\cite{robotics_handbook}, measurement systems~\cite{measurement_uncertainty,uncertainty_beyerer}, shape analysis~\cite{shape_analysis}, motion and path planning~\cite{planning_algorithms}, and computer aided design~\cite{cad_curves,curves_surfaces_cad}. In many such settings, plane curves are used to describe object boundaries, feature contours, nominal paths, or measurement systems calibration curves. Classical representations such as parameterized form, polygonal approximation~\cite{Tapp2016}, polynomials, and splines~\cite{curves_surfaces_cad,cad_curves} are well suited for describing plane curves and for supporting numerical computation. However, they are typically deterministic: they specify where a curve lies, but do not directly encode in the normal direction of the curve.

At the same time, probabilistic representations play a central role in estimation, inference, and uncertainty aware modeling. \acp{GMM} provide a flexible and analytically tractable class of parametric \acp{PDF} for representing uncertainty. Without knowing anything about the problem \acp{GMM} are are usually fitted to point samples (data) and therefore do not, by themselves, exploit the prior knowledge that the data lie on a curve. As a result, there remains a gap between deterministic curve representations, which retain geometric meaning, and probabilistic models, which support uncertainty aware modeling that involves inducing uncertainty to the deterministic curve.

In this work, we propose a structured way to bridge this gap for a special case. Starting from a plane curve, we select vertices on the curve and connect consecutive vertices by line segments to obtain a polygonal representation of the curve. In contrast to a purely deterministic polygonal chain, each segment is assigned a uncertainty parameter in the normal direction of the segment that models uncertainty, or tolerance. This yields a user defined probabilistic polygonal representation of the curve: a collection of thin geometric primitives that preserve local position and orientation while also carrying an explicit notion of uncertainty in the tangent and in the normal direction of the segment.

For each segment, we define a \ac{RV} that distributes the probability measure uniformly along the segment tangent direction and normally in the segment normal direction. This construction leads, through exact first and second central moment calculations, to Gaussian \ac{PDF} whose mean lies at the segment midpoint and whose covariance captures both tangential uncertainty and normal uncertainty. By mixing these Gaussian components appropriate weights, we obtain a \ac{GMM} representation of the entire curve. The resulting model is directly derived from user defined probabilistic polygonal representation parameters rather than obtained by iterative fitting to sampled point sets.

The proposed construction preserves local and global shape in a probabilistic and analytically tractable form. In particular, it holds segment position, segment tangent and normal direction, segment arc length scale, and uncertainty in the normal direction of the segment through the \ac{GMM} representation. The construction applies to smooth, closed, open, non regular, periodic, and self intersecting rectifiable plane curves, requires only the evaluation of curve points at partition nodes, and naturally supports adaptive discretization and varying uncertainty in the  normal direction of the segments. In this way, deterministic plane curves get first transformed into a user defined probabilistic polygonal representation and then transformed into a \ac{GMM}.

This representation is relevant in several application domains. In uncertainty aware CAD and digital twins, user defined uncertainty in the normal direction of the segment can represent manufacturing tolerances or wear. In robotics and autonomous systems, probabilistic polygonal approximations of curves provide representations of boundaries of obstacle contours with safety margins or sensor uncertainty. In trajectory modeling and statistical shape analysis, the induced \acp{GMM} furnish compact priors that preserve nominal geometry while allowing controlled uncertainty normal to the boundary of the object.

Our main contributions are as follows:
\begin{itemize}
    \item We introduce a user defined probabilistic polygonal representation of plane curves in which each polygonal segment is equipped with a uncertainty in the normal direction of the segment.
    \item We replace the \ac{PDF} of the user defined probabilistic segment with a Gaussian and derive closed form expressions for the mean and covariance matrix of the Gaussian in terms of the user defined probabilistic segment parameters.
    \item We construct a \ac{GMM} representation of the user defined probabilistic polygonal representation of the curve whose components parameters are derived again in closed form.
    \item We demonstrate that the framework applies to a broad range of canonical plane curves and can capture their local and the global shape, including smooth, closed, open, non regular, singular, and self intersecting ones, and supports adaptive discretization as well as  varying uncertainty in the normal direction of the segments.
\end{itemize}

The remainder of the paper is organized as follows. In Section~\ref{sec:problem-formulation}, we formally introduce the user defined probabilistic polygonal representation of a plane curve. We then derive the per segment Gaussian \ac{PDF} and the induced \ac{GMM} construction. Finally in \Cref{sec:examples_discussion} we illustrate the framework on representative examples and discuss modeling choices, limitations, and possible extensions.

\subsection{Technical Applications}
Beyond the theoretical formulation, the proposed construction admits several concrete application scenarios. In computer aided design and digital twinning, edges and intersection curves of CAD models can be mapped deterministically to \acp{GMM}, yielding a differentiable ``Gaussian splat'' representation of ideal geometry. User defined uncertainties are encoded directly in the uncertainty parameters $\tau_i$, enabling uncertainty aware rendering, tolerance analysis, and geometric reasoning in a form that is compatible with recent Gaussian splatting based scene representations~\citep{kerbl2023gaussiansplatting}. 

In robotics and autonomous driving, free space and obstacle boundaries extracted from LiDAR or range data are typically modeled via occupancy grids or distance fields~\citep{elfes1989occupancy}. Our framework offers an alternative boundary centric view: local obstacle contours are represented as probabilistic tubular neighborhoods around user defined probabilistic polygonal representation of curves, where sensor noise, safety margins, or learned epistemic uncertainty are encoded intrinsically through $\tau_i$. This connects naturally to recent work that uses \acp{GMM} as compact, expressive occupancy models~\citep{goel2024giragaussianmixturemodels,yuan2023guassianmixturebasedmotionplanning}, but differs in that the mixture components are obtained in closed form from geometric primitives rather than via iterative fitting. More broadly, the construction provides a generic way to endow deterministic plane curves with a tunable transverse uncertainty field while retaining an analytically tractable \ac{GMM} structure suitable for inference, optimization, and control.
\section{Related Work}

We focus here on work related to constructing \ac{GMM} representations of plane curves under uncertainty. Although some existing methods are not formulated directly for parametrized curves, they can often be applied after first converting the curve into a point cloud by sampling its parameter domain. In this way, a parametrized curve can be reduced to a finite collection of points, after which point cloud based \ac{GMM} methods such as \cite{expectation_sparsification}, \cite{manifold_registeration}, and related approaches become applicable. These methods typically assume noisy point data and construct a \ac{GMM} representation using EM or its suitable modifications.

A related but more geometry aware approach is presented in \cite{triangle_gmm}, where the underlying manifold is given as a triangular mesh. There, the \ac{GMM} construction is informed by the mesh structure, and the EM procedure is modified accordingly. In this sense, the method combines data driven fitting with an explicit geometric representation of the surface.
\section{Problem Formulation and the Gaussian Mixture Model Construction}
\label{sec:problem-formulation}
For the convenience of the reader, we begin by explicitly stating several standard notions from differential geometry of curves in order to keep the presentation self contained. Readers already familiar with this material may wish to proceed more quickly through the following definitions and remarks and focus on the user defined probabilistic polygonal representation and the \ac{GMM} construction that it induces.
\begin{definition}[Parameter Interval]
A parameter interval is a compact interval
\begin{equation}
I=[a,b]\subset\mathbb{R},
\qquad a,b \in \mathbb{R} \land a<b.
\end{equation}
\end{definition}

\begin{definition}[Plane Curve]
Let $I\subset\mathbb{R}$ be a parameter interval. A plane curve is a map
\begin{equation}
\alpha:I\to\mathbb{R}^2.
\end{equation}
\end{definition}

\begin{remark}
Our goal is to introduce a user defined probabilistic polygonal representation of the curve $\alpha$ that preserves the local and the global shape while also allowing for uncertainty in the normal direction of the curve.
\end{remark}

\begin{definition}[Partition and Induced Polygonal Approximation]
Let $I\subset\mathbb{R}$ be a parameter interval and let $\alpha:I\to\mathbb{R}^2$ be a plane curve. A partition of $I$ is a finite ordered set
\begin{equation}
t_0<t_1<\cdots<t_N,
\qquad t_i\in I,
\end{equation}
such that
\begin{equation}
I=\bigcup_{i=1}^N [t_{i-1},t_i].
\end{equation}
The corresponding vertices on the curve are defined by
\begin{equation}
v_i \coloneq \alpha(t_i),\qquad i=0,\dots,N,
\end{equation}
and the induced polygonal segments are
\begin{equation}
S_i \coloneq (v_{i-1},v_i),\qquad i=1,\dots,N.
\end{equation}
\end{definition}

\begin{definition}[Uncertainty Parameter in the Normal Direction]
For each user defined segment $S_i$, an uncertainty parameter in the normal direction of the segment is a scalar
\begin{equation}
\tau_i>0,
\qquad i=1,\dots,N.
\end{equation}

\end{definition}

\begin{remark}
The parameters $\tau_i$ distinguish the user defined probabilistic polygonal representation from a purely deterministic polygonal approximation by assigning uncertainty in the normal direction of each and every one of the segments.
\end{remark}

\begin{definition}[User Defined Probabilistic Polygonal Representation of a Plane Curve]
Let $\alpha \colon I\to\mathbb{R}^2$ be a rectifiable plane curve, together with a partition induced segments $\{S_i\}_{i=1}^N$ and associated uncertainty in the normal direction of the segment parameters $\{\tau_i\}_{i=1}^N$. The collection
\begin{equation}
\mathcal{P}=\{(S_i,\tau_i)\}_{i=1}^N
\end{equation}
is called a user defined probabilistic polygonal representation of $\alpha$.
\end{definition}

\begin{remark}
Each pair $(S_i,\tau_i)$ is interpreted as a realization of a \ac{RV} that is distributed uniformly along the tangent direction of the segment and Gaussian distributed along the normal direction of the segment with the standard deviation $\tau_i$.
\end{remark}

\begin{definition}[Per Segment Random Variable]
For a segment $S_i=(v_{i-1},v_i)$, define
\begin{equation}
\ell_i := \Vert v_i-v_{i-1}\Vert,
\qquad
e_i := \frac{v_i-v_{i-1}}{\ell_i}
\quad (\ell_i>0),
\end{equation}
and let
\begin{equation}
n_i := Re_i,
\qquad
R=
\begin{pmatrix}
0 & -1\\
1 & \phantom{-}0
\end{pmatrix}.
\end{equation}
For \(\tau_i\ge 0\), the \ac{RV} associated with \(S_i\) is
\begin{equation}
X_i := v_{i-1} + \ell_i U_i e_i + \tau_i Z_i n_i,
\end{equation}
where
\begin{equation}
U_i \sim \mathcal{U}(u_i|0,1),
\qquad
Z_i \sim \mathcal{N}(z_i|0,1),
\qquad
U_i \perp Z_i.
\end{equation}
\end{definition}

\begin{proposition}[Expectation and Variance of a User Defined Probabilistic Segment]
Let $X_i$ be defined as above. Then its expectation is
\begin{equation}
\mathbb{E}(X_i) = \frac{v_{i-1}+v_i}{2},
\end{equation}
and its Variance is
\begin{equation}
\mathbb{V}(X_i) = \frac{\ell_i^2}{12} e_i e_i^\top + \tau_i^2 n_i n_i^\top.
\end{equation}
\end{proposition}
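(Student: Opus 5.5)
The plan is to compute both moments directly from the affine representation $X_i = v_{i-1} + \ell_i U_i e_i + \tau_i Z_i n_i$. Here $v_{i-1}$, $\ell_i$, $e_i$, $n_i$ and $\tau_i$ are deterministic, and all randomness sits in the two scalar variables $U_i$ and $Z_i$. We will use linearity of expectation, the standard moments of $\mathcal{U}(0,1)$ and $\mathcal{N}(0,1)$, and the independence $U_i \perp Z_i$ to kill the cross term.

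\textbf{Expectation.} By linearity,
\begin{equation}
\mathbb{E}(X_i) = v_{i-1} + \ell_i\,\mathbb{E}(U_i)\,e_i + \tau_i\,\mathbb{E}(Z_i)\,n_i .
\end{equation}
Substituting $\mathbb{E}(U_i)=\tfrac12$ and $\mathbb{E}(Z_i)=0$ gives $v_{i-1}+\tfrac12\ell_i e_i$. Since $\ell_i e_i = v_i - v_{i-1}$ by the definition of $e_i$ (using $\ell_i>0$), this equals $\tfrac12(v_{i-1}+v_i)$.

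\textbf{Covariance.} Write the centred variable
\begin{equation}
X_i-\mathbb{E}(X_i) = \ell_i\bigl(U_i-\tfrac12\bigr)e_i + \tau_i Z_i n_i .
\end{equation}
Expanding the outer product $(X_i-\mathbb{E}X_i)(X_i-\mathbb{E}X_i)^\top$ gives four terms.
\begin{itemize}
\item The first diagonal term has expectation $\ell_i^2\,\mathbb{V}(U_i)\,e_ie_i^\top$, with $\mathbb{V}(U_i)=\mathbb{E}(U_i^2)-\tfrac14=\tfrac13-\tfrac14=\tfrac1{12}$.
\item The second diagonal term has expectation $\tau_i^2\,\mathbb{E}(Z_i^2)\,n_in_i^\top=\tau_i^2 n_in_i^\top$.
\item The two mixed terms have expectation $\ell_i\tau_i\,\mathbb{E}\bigl[(U_i-\tfrac12)Z_i\bigr](e_in_i^\top+n_ie_i^\top)$. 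By independence this factors as $\mathbb{E}(U_i-\tfrac12)\,\mathbb{E}(Z_i)=0$.
\end{itemize}
Summing the surviving terms yields the stated formula.

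There is no real obstacle. The only points needing care are:
\begin{itemize}
\item keeping track of the outer-product (matrix) form, so that the deterministic vectors $e_i$, $n_i$ are pulled out of the expectation correctly;
\item noting that independence, and not merely orthogonality of $e_i$ and $n_i$, is what eliminates the cross terms;
\item the degenerate case $\tau_i=0$ permitted in the definition, where the formula still holds because the normal term simply vanishes.
\end{itemize}
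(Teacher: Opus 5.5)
Your proposal is correct and follows the same route as the paper. Both use linearity of expectation with $\mathbb{E}(U_i)=\tfrac12$ and $\mathbb{E}(Z_i)=0$ for the mean, and independence of $U_i$ and $Z_i$ to split the covariance into $\tfrac{\ell_i^2}{12}e_ie_i^\top + \tau_i^2 n_i n_i^\top$; your explicit expansion of the outer product with vanishing mixed terms is just the unpacked form of the paper's step $\mathbb{V}(X_i)=\mathbb{V}(U_i d_i)+\mathbb{V}(\tau_i Z_i n_i)$, where $d_i=v_i-v_{i-1}=\ell_i e_i$.
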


\begin{proof}
By independence of $U_i$ and $Z_i$, and using
\begin{equation}
\mathbb{E}(U_i)=\frac12,
\qquad
\mathbb{V}(U_i)=\frac1{12},
\qquad
\mathbb{E}(Z_i)=0,
\qquad
\mathbb{V}(Z_i)=1,
\end{equation}
we obtain
\begin{equation}
\mathbb{E}(X_i)
=
v_{i-1} + \mathbb{E}(U_i)(v_i-v_{i-1}) + \tau_i \mathbb{E}(Z_i)\, n_i
=
v_{i-1} + \frac12 (v_i-v_{i-1})
=
\frac{v_{i-1}+v_i}{2}.
\end{equation}
Moreover,
\begin{equation}
\mathbb{V}(X_i)
=
\mathbb{V}(U_i d_i) + \mathbb{V}(\tau_i Z_i n_i),
\end{equation}
since the two terms are independent. Therefore,
\begin{equation}
\mathbb{V}(U_i (v_i-v_{i-1}))
=
\mathrm{Var}(U_i)\, (v_i-v_{i-1}) (v_i-v_{i-1})^\top
=
\frac{1}{12} (v_i-v_{i-1}) (v_i-v_{i-1})^\top,
\end{equation}
and
\begin{equation}
\mathbb{V}(\tau_i Z_i n_i)
=
\tau_i^2 \mathbb{V}(Z_i)\, n_i n_i^\top
=
\tau_i^2 n_i n_i^\top.
\end{equation}
Hence,
\begin{equation}
\mathbb{V}(X_i)
=
\frac{1}{12} d_i d_i^\top + \tau_i^2 n_i n_i^\top.
\end{equation}
If $\ell_i>0$, the identity $d_i=\ell_i e_i$ yields
\begin{equation}
\mathbb{V}(X_i)
=
\frac{\ell_i^2}{12} e_i e_i^\top + \tau_i^2 n_i n_i^\top.
\end{equation}
\end{proof}

\begin{remark}[Choice of Gaussian Mixture Model Weights]
If the probability measure is assumed to be uniformly distributed along each segment \(S_i\), then
the mass assigned to \(S_i\) is proportional to its length \(\ell_i\).
Therefore, the natural choice of mixture weights is
\begin{equation}
\pi_i=\frac{\ell_i}{\sum_{j=1}^N \ell_j} \text{ for } 0 <\sum_{j=1}^N\ell_j< +\infty \land \ell_i \neq 0.
\end{equation}
\end{remark}

\begin{definition}[Gaussian Mixture Model induced by a User Defined Probabilistic Polygonal Representation of a Plane Curve]
Let $\mathcal{P}=\{(S_i,\tau_i)\}_{i=1}^N$ be a probabilistic polygonal representation of a rectifiable plane curve $\alpha:I\to\mathbb{R}^2$. The associated \ac{GMM} is by construction:
\begin{equation}
\mathcal{G}(x|g)
\coloneq
\sum_{i=1}^N \pi_i\mathcal{N}(x|m_i,\Sigma_i),
\end{equation}
where
\begin{equation}
m_i = \frac{v_{i-1}+v_i}{2},
\qquad
\Sigma_i = \frac{1}{12} d_i d_i^\top + \tau_i^2 n_i n_i^\top,\qquad \pi_i = \frac{\ell_i}{\sum_{j=1}^N \ell_j}.
\end{equation}
\end{definition}

\begin{remark}
This construction associates to each user defined probabilistic polygonal segment \(S_i=(v_{i-1},v_i)\) a Gaussian component whose parameters are determined by moment matching of the corresponding \ac{PDF}. In particular, the mean \(m_i\) is the barycenter of the segment, namely its midpoint, and the covariance matrix \(\Sigma_i\) decomposes into a tangential part and a normal part. The tangential contribution is  \(\ell_i^2/12 e_i e_i^\top\), therefore encoding the central second moment of a uniform distribution on a line. The normal contribution is given by \(\tau_i^2 n_i n_i^\top\), encoding the user defined uncertainty in the normal direction of the segment. Consequently, the eigen vectors of \(\Sigma_i\) are aligned with the local tangent and the local normal of the user defined segment.
\end{remark}

\begin{remark}
The parameter interval $I$ is arbitrary, so the construction does not depend on a particular choice such as $I=[0,1]$ or the parameterized by arc length form.
\end{remark}
\begin{remark}
The curve $\alpha$ need not be regular everywhere. The user defined probabilistic polygonal representation of a curve only requires the evaluation of vertices $\alpha(t_i)$ at partition nodes. Therefore, the construction remains meaningful for curves with singularities, loops, periodicity, and self intersections, provided the chosen partition is appropriate.
\end{remark}
\begin{remark}
The user defined uncertainty in the normal direction of the segment $\tau_i$s are allowed to depend on the segment index $i$. Thus, $\tau_i$ should be understood as a local parameter associated with the $i$-th polygonal segment. This gives flexibility to the use case and context.
\end{remark}
\begin{remark}
The mixture weights $\pi_i$ may be specified in a manner depending on the problem under consideration. A canonical choice is to take $\pi_i$ proportional to the arc length of the $i$-th segment, which yields a representation consistent with the geometric mass of the polygonal approximation. However, in other settings, other choices of $\pi_i$ may be more appropriate, depending on the intended interpretation of the \ac{GMM} and the relative importance assigned to different segments.
\end{remark}

\begin{remark}[Zero Uncertainty in the Normal Direction]
\label{remark:zero_uncertainty}
If $\tau_i = 0$, then
\begin{equation}
\Sigma_i = \frac{\ell_i^2}{12}\, e_i e_i^\top,
\end{equation}
so the covariance matrix has rank one and is therefore singular. Geometrically, this corresponds to a user defined probabilistic segment with uncertainty only in the tangent direction of the segment and no uncertainty in the normal direction of the segment, which is consistent with the intrinsically one dimensional nature of the underlying curve segment. However, the associated Gaussian \ac{PDF} is then degenerate and cannot be interpreted as an ordinary non singular \ac{PDF} on $\mathbb{R}^2$. Consequently, numerical procedures that require covariance matrix inversion, factorization, or pointwise probability density function evaluation may become ill-posed or unstable. In such cases, one may either regularize the covariance matrix, for example by Tikhonov regularization, or adopt a distributional interpretation supported on the segment itself. A systematic treatment of this degenerate case is beyond the scope of the present paper.
\end{remark}

\begin{remark}[Adaptive Discretization]
\label{remark:adaptive_discretization}
The construction naturally permits nonuniform segment lengths and segment dependent uncertainty parameters $\tau_i$. This allows the discretization to be refined locally in regions where the curve's first derivative or curvature is high, or near singularities, self intersections, while using coarser segments in other regions. In this way, both the polygonal approximation and the associated \ac{GMM} can be adapted to the capture local and global features more faithfully.

At the same time monitoring the first derivative and the curvature and making the partitions finer means increasing the computational cost. Designing principled discretization strategies that balance these competing objectives is beyond the scope of the present paper.
\end{remark}
\section{Examples and Discussion}
\label{sec:examples_discussion}
In this section we discuss how this representation behaves on canonical curves, and how it can be interpreted in application settings.
\subsection{Illustrative Canonical Curves}

The examples collected in \Cref{sec:appendix-per-curve} (circle, ellipse, parabola, logarithmic spiral, cusp curve, cycloid, astroid, cardioid, lemniscate, rose curves, and various polygonal shapes) demonstrate that the construction applies uniformly across several overlapping cases:
\begin{itemize}
    \item Smooth regular curves (circle, ellipse, parabola, measurement curve, mirrored S-curve, logarithmic spiral): the the means $m_i$ form a standard polygonal approximation of $\alpha$, and the covariance matrices rotate smoothly, with the dominant eigenvector following the tangent. As the maximal segment length $\max_i \ell_i$ decreases, the resulting \ac{PDF} concentrates in a thin tubular neighborhood around the curve.
    \item Curves with finite number of  singularities (semicubical cusp, astroid, cardioid) that away from singular points the behavior is as above. Near cusps, the tangent direction is ill-defined at the limit, and the local mixture of neighboring components naturally reflects this ambiguity. No special treatment is required in the construction beyond ensuring that the partition includes the singular points as vertices.
    \item Self intersecting and disconnected curves (lemniscate, rose curves, disconnected semicircles, Yin–Yang-type curves) that the model does not assume global injectivity of the parametrization. Different branches or disconnected components correspond simply to disjoint groups of segments. At self-intersections, the local \ac{PDF} is a superposition of Gaussian components associated with distinct branches, each carrying its own tangent and normal directions.
\end{itemize}
These examples illustrate that the representation remains meaningful for regular, non regular, self intersecting, and topologically disconnected curves, as long as the underlying partition is chosen appropriately.

\begin{figure}[t]
  \centering
  \includegraphics[width=\linewidth]{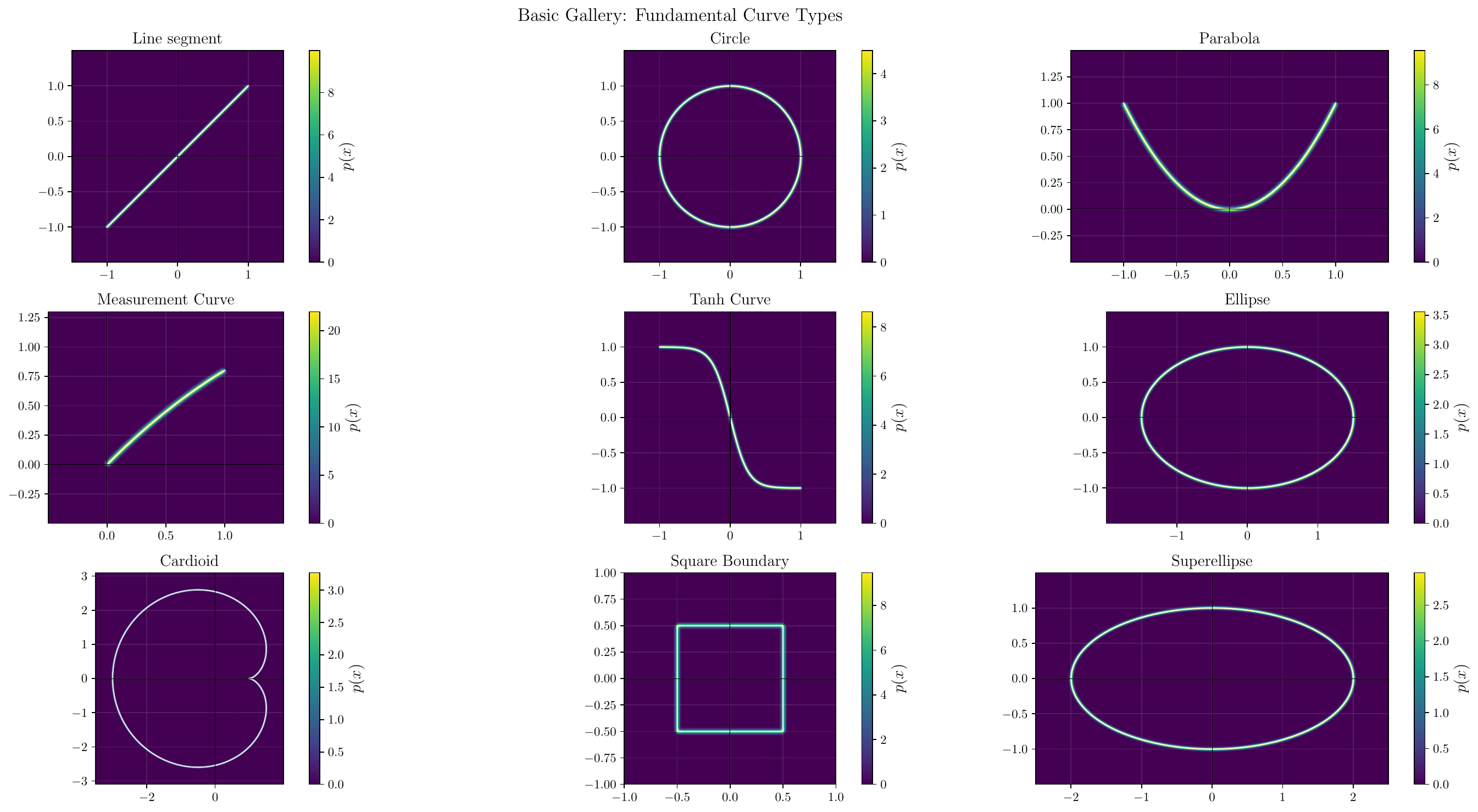}
  \caption{
    Simple gallery of fundamental curve types (line segment, circle, ellipse, parabola, a bijective function, a mirrored tanh-based curve, cardioid, superellipse, and a square).
    For each curve, we show only the final \ac{GMM} \ac{PDF} heatmap.
    Per curve visual convergence and split views are provided in
    Appendix~\ref{sec:appendix-per-curve}.
  }
  \label{fig:basic-gallery}
\end{figure}

\begin{figure}[t]
  \centering
  \includegraphics[width=\linewidth]{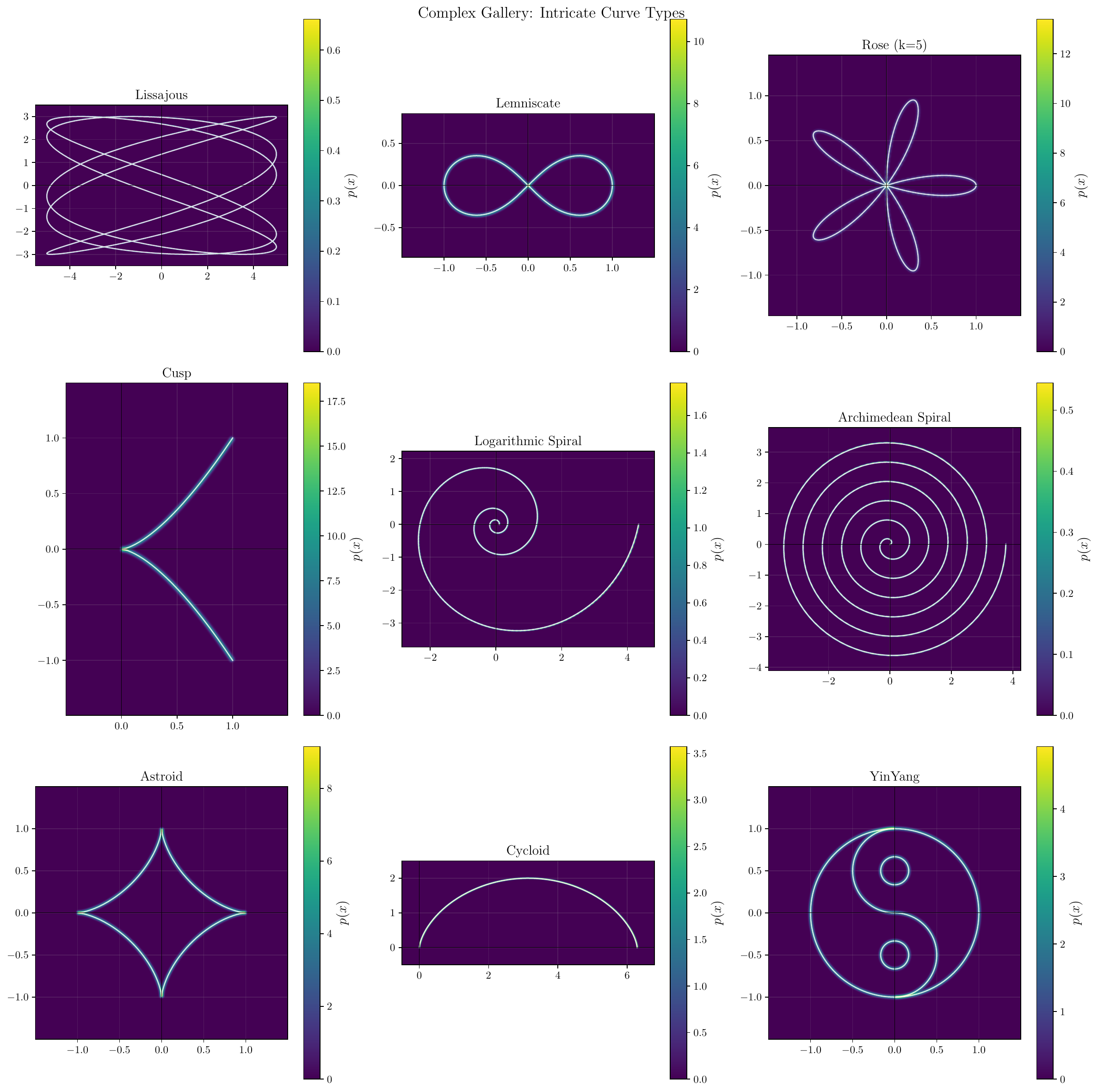}
  \caption{
    Complex gallery of intricate curve types (Lissajous, lemniscate,
    rose curve with $5$ petals, Yin-Yang curve, downward logarithmic spiral, downward Archimedean spiral, astroid, cycloid, and the semicubical cusp). For each curve, we show only the final \ac{GMM} \ac{PDF} heatmap.
    Per curve visual convergence and split views are provided in
    Appendix~\ref{sec:appendix-per-curve}.
  }
  \label{fig:complex-gallery}
\end{figure}

Figure~\ref{fig:basic-gallery} shows a simple gallery of fundamental curves,
while Figure~\ref{fig:complex-gallery} illustrates more intricate examples.
Detailed convergence behavior and split views for each curve are collected in
Appendix~\ref{sec:appendix-per-curve}.

\subsection{Application Oriented Interpretations}
\subsubsection{Uncertainty Aware CAD and Digital Twins}
Edges and feature curves extracted from CAD models can be mapped deterministically to \acp{GMM}, yielding a continuous probabilistic representation of nominal geometry. The curve $\alpha$ encodes the ideal design, while the thickness parameters $\tau_i$ encode manufacturing tolerances, wear, or modeling uncertainty along different parts of the boundary. The resulting Gaussian components can be interpreted as anisotropic ``splat'' primitives attached to edges, providing an analytically tractable alternative to mesh based or voxel based uncertainty models and aligning conceptually with recent Gaussian splatting representations in graphics and vision~\citep{kerbl2023gaussiansplatting}. In this setting, manually prescribed thickness fields allow engineers to encode tolerance classes directly into the probabilistic geometry.

\subsubsection{Probabilistic Obstacle Boundaries in Robotics}
In mobile robotics and autonomous driving, environment representations are often grid based or voxel based, encoding per cell occupancy probabilities~\citep{elfes1989occupancy,thrun2005probabilistic}. Our curve based construction offers a complementary boundary centric view: local obstacle contours extracted from LiDAR or depth data are represented as probabilistic polygonal curves. Each segment carries a thickness $\tau_i$ that can encode sensor noise in the normal direction, safety margins, or learned epistemic uncertainty. The induced \ac{GMM} then serves as a continuous, differentiable approximation of an object avoidance map, enabling collision probability queries, risk aware trajectory optimization, and closed form propagation of uncertainty through linearized motion models. Related work shows that \ac{GMM} based occupancy models can achieve high information density and efficient communication~\citep{goel2024giragaussianmixturemodels,yuan2023guassianmixturebasedmotionplanning}. In contrast, our approach obtains mixture parameters in closed form from geometric primitives, so the construction can be implemented as a lightweight and reliable subroutine even on embedded platforms.

\subsubsection{Trajectory Templates and Statistical Shape Models.}
Nominal motion patterns or shape templates can be stored as plane curves and converted to \acp{GMM} using the proposed construction. This yields compact, interpretable priors for Bayesian filtering, prediction, or registration tasks: the component means encode the template geometry, while $\tau_i$ captures allowable deviations orthogonal to the nominal path or boundary. In shape analysis, collections of annotated curves can be converted to families of \acp{GMM}, opening the door to mixture based statistical shape models and distance measures that respect both geometric structure and spatial uncertainty.

\subsection{Limitations and Open Problems}
\label{sec:limitations}
Several important issues remain open (look at \Cref{remark:zero_uncertainty} and \Cref{remark:adaptive_discretization}) in the present theoretical development. will require further work, together with additional tools from functional analysis and geometry. The purpose of the present paper is therefore more modest: to provide a clear and tractable starting point, namely a direct mapping from deterministic plane curves with user defined probabilistic polygonal representation of the plane curves to a \ac{GMM} representation.

\subsubsection{Convergence and Consistency}
\label{sec:convergence_consistency}
The present construction is intended to be asymptotically well behaved under refinement of the underlying discretization. In particular, under suitable regularity assumptions on the curve, sufficiently fine partitions, and appropriate control of the thickness field, one expects the polygonal approximation to converge to the underlying curve, the covariance matrices to remain aligned with the local tangent and normal directions, and the induced \ac{GMM} to converge, in a suitable weak sense, to a probabilistic tubular representation of the curve.

However, turning these informal consistency statements into rigorous mathematical results is nontrivial. A satisfactory analysis would require precise assumptions, a careful choice of convergence notion, and quantitative control of approximation error, for example in terms of geometric error, probability metrics, or divergence based criteria. It would also be necessary to understand how these questions interact with \Cref{remark:adaptive_discretization}, \Cref{remark:zero_uncertainty} singularities, and self intersections.

For these reasons, a full treatment of convergence rates, consistency, and approximation guarantees lies beyond the scope of the present paper. Since the more immediate challenges are of an applied and computational nature, we leave a rigorous asymptotic analysis to future work.

\subsubsection{Extensions Beyond Plane Curves}
Although the present work focuses on plane curves, the same ideas extend naturally to space curves and to surface in $\mathbb{R}^3$. In those settings, Gaussian components would encode tangent directions together with uncertainty in the corresponding normal direction(s), suggesting probabilistic representations of higher dimensional geometric objects. Developing these extensions in a mathematically satisfactory and computationally effective way will be done in future works.
\subsubsection{Need for a New Calculus}
\label{sec:new_calculus}
The present construction replaces a deterministic curve by a \ac{GMM} representation in the ambient space. While this yields an analytically tractable probabilistic model, it also changes the nature of the object under study: instead of working directly with a smooth parametrized curve or manifold, one works with a finite mixture of Gaussian components. As a consequence, many classical tools from differential geometry are no longer directly available in their standard form. In particular, notions such as tangent bundles, connections, covariant derivatives, and Christoffel symbols are defined for smooth manifolds and tensor fields, not for \ac{GMM} representations.

This does not mean that the geometric information is lost; rather, it is encoded in a different form through component means, covariance matrices, and mixture weights. Developing a mathematically satisfactory framework that plays, for \acp{GMM}, a role analogous to differential calculus on smooth manifolds therefore appears to be an important open problem. Such a framework would ideally permit geometric notions such as curvature, transport, and compatibility with local orientations and uncertainty to be expressed directly at the level of the \ac{GMM} representation. We leave the development of such a calculus to future work.
\section{Conclusion}
We have developed mathematically explicit formulas to represent user defined probabilistic polygonal of a plane curve by \acp{GMM}. Beginning with a reference user defined probabilistic segment that has uniform uncertainty in the tangent direction and Gaussian uncertainty in the normal direction, we derived its central moment matched Gaussian equivalent in closed form and since affine transformation of a Gaussian \ac{RV} is a Gaussian \ac{RV} with arithmetically calculable mean and covariance matrix; the construction is generalizable for arbitrary user defined probabilistic segment. Consequently, each user defined probabilistic segment can be represented by a Gaussian component, yielding a \ac{GMM} representation of the user defined user defined probabilistic polygonal representation of the curve.

In contrast to the classical parametrized formulation of plane curves, the resulting \ac{GMM} incorporates uncertainty at the level of the representation itself. The resulting construction therefore provides a probabilistic description in which both the underlying curve and the user defined uncertainties are encoded. This makes the framework a natural candidate for applications in which geometric objects are observed, estimated, or manipulated under uncertainty.

The examples presented in the appendix indicate that the proposed construction preserves the shape of the original curve both locally and globally. In particular, ellipses that are the representative of the covariance matrices and \ac{PDF} visualizations show that the constructed \ac{GMM} preserve the directional structure and the probabilistic nature of the user defined probabilistic polygonal representation of the curve.

More broadly, the present construction may be viewed as a first step toward analogous representations for more general manifolds. Extensions to higher-dimensional curves and surfaces appear conceptually natural, but, as indicated in \Cref{remark:zero_uncertainty}, \Cref{remark:adaptive_discretization}, and \Cref{sec:convergence_consistency}, they also raise substantial additional analytical, geometric, and computational questions. In particular, as noted in \Cref{sec:new_calculus}, much remains to be developed before a satisfactory differential geometric framework for \ac{GMM} representations can be established.
\begin{ack}
Funded by:\\
- Project Name: SFB 1574, A Circular Factory for the Perpetual Product\\
- Funding Agency: Deutsche Forschungsgemeinschaft (DFG)\\
- Project ID: 471687386
\end{ack}
\bibliographystyle{plainnat}
\bibliography{references}

\medskip

\appendix
\section{Canonical Curve Examples, Visual Convergence, and Split Views}
\label{sec:appendix-per-curve}

The primary purpose of this appendix is to document a broad set of canonical curve examples and to show
the behavior of the proposed geometric \ac{GMM} construction under
successive refinement of the polygonal discretization.

In the main body (Figures~\ref{fig:basic-gallery} and~\ref{fig:complex-gallery})
we showed only the final \ac{GMM} \acp{PDF} for a selection of simple
and more intricate curves.
Here, we revisit each of these curves individually and provide more detailed
visualizations in a standardized format.

The appendix is organized example by example.
For each plane curve, we include:

\begin{enumerate}
  \item a brief definition of the parametrization and its basic geometric properties;
  \item remarks explaining why the example is relevant for the present framework and
        what qualitative behavior is expected as the number of user defined probabilistic segments (and
        \ac{GMM} components) increases;
  \item two figures:
    \begin{enumerate}
      \item a convergence series 
            showing the induced \ac{GMM} for several values of $N$ on a
            common spatial scale;
      \item a split view in which the left
            panel displays the curve together with ellipses that are the representative of the covariance matrices of the \ac{GMM} components, and the right
            panel shows the corresponding \ac{PDF} heatmap.
    \end{enumerate}
\end{enumerate}

The examples have been chosen to cover a range of topological and geometrical situations,
including smooth open curves, smooth closed curves, piecewise regular curves with
corners, non regular curves, and curves having singularities or self intersections.
Taken together, they provide a qualitative picture of how the proposed
representation captures local tangent and local normal structure, singular behavior, global shape, and topologically disconnectedness across different classes of planar curves.
\subsection{Line Segment}
\label{sec:line-segment}

\begin{definition}[Line segment]
Let $v_0, v_1 \in \mathbb{R}^2 \land \Vert v_0 - v_1\Vert \neq 0$. The line segment joining
$v_0$ and $v_1$ is the curve
\begin{equation}
  \alpha \colon I \to \mathbb{R}^2, \qquad
  t \mapsto v_0 + t(v_1-v_0) .
\end{equation}
Since
\begin{equation}
  \alpha'(t) = v_1 - v_0,
\end{equation}
the curve has constant tangent direction and constant speed
\begin{equation}
  \Vert \alpha'(t)\Vert = \Vert v_1 - v_0 \Vert.
\end{equation}
Its arc length is therefore
\begin{equation}
  \ell = \int_I \Vert \alpha'(t)\Vert \, \mathrm{d}t = \Vert v_1 - v_0 \Vert.
\end{equation}
\end{definition}

\begin{remark}[Why this example]
The line segment is the basic building block of the construction. Every
polygonal approximation consists of such segments, and each segment induces a
single Gaussian component via moment matching.
\end{remark}

\begin{figure}[h!]
  \centering
  \includegraphics[width=\linewidth]{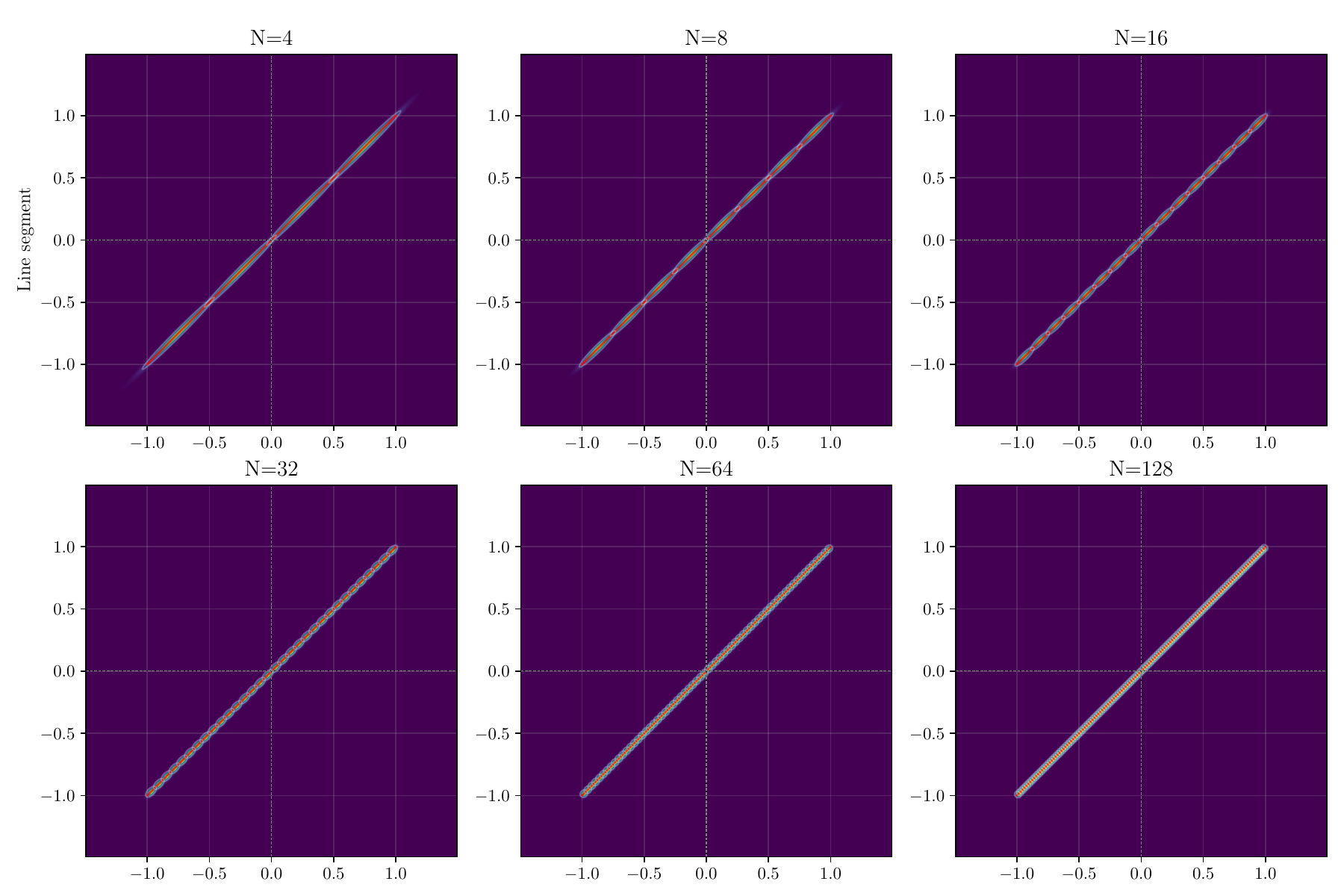}
  \caption{
  Convergence of the \ac{GMM} representation for the line segment,
  a straight regular line segment from $x_1 = (-1,-1)$ to $x_2 = (1,1)$,
  as the number of components $N$ increases.
}
  \label{fig:conv-line_segment}
\end{figure}

\begin{figure}[h!]
  \centering
  \includegraphics[width=\linewidth]{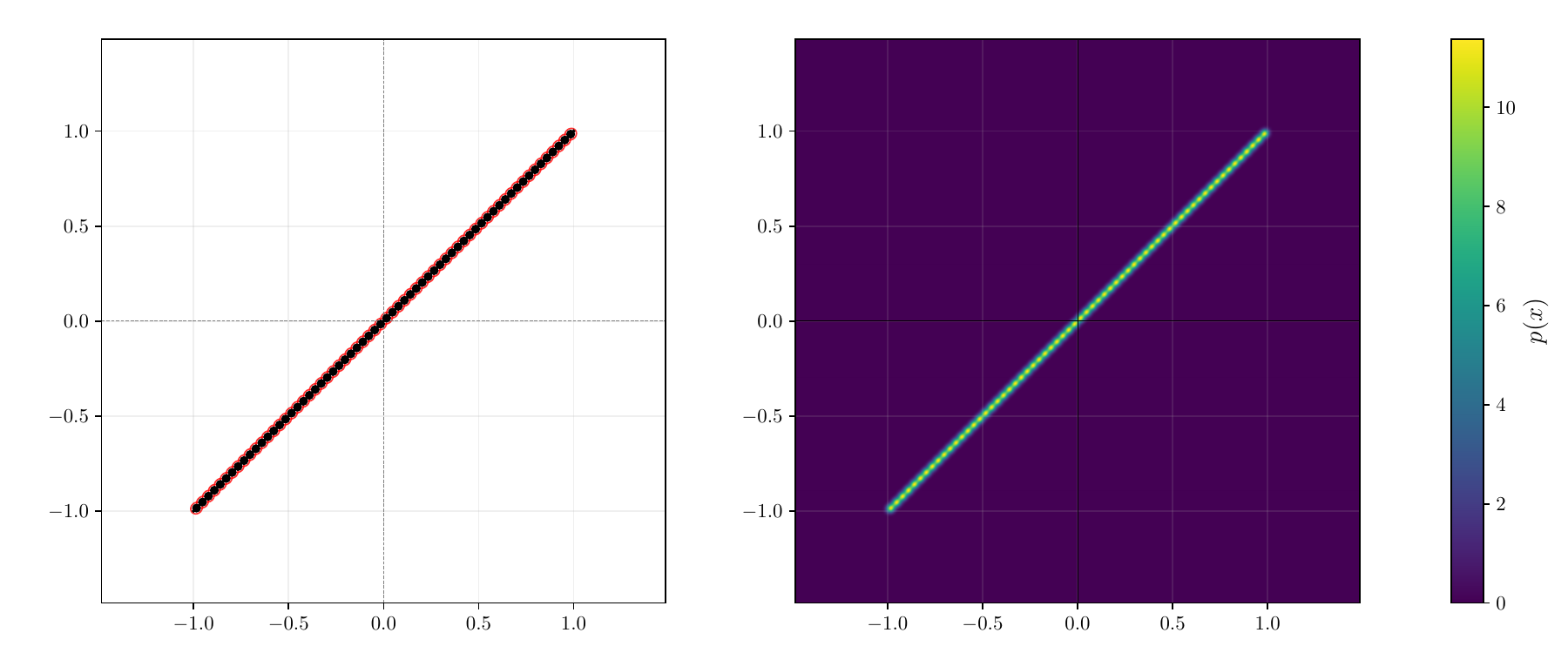}
  \caption{
  Split view for the line segment, a straight regular line segment from
  $x_1 = (-1,-1)$ to $x_2 = (1,1)$: cover of the curve by ellipses that are the representatives of the covariances of the Gaussian components of the \ac{GMM} representation
  (left) and the corresponding \ac{GMM} \ac{PDF} heatmap (right).
}
  \label{fig:split-line_segment}
\end{figure}
\FloatBarrier
\subsection{Circle}
\label{sec:circle}

\begin{definition}[Circle]
Let $r \in \mathbb{R}_{>0}$. The circle of radius $r$ centered at zero is the curve
\begin{equation}
  \alpha \colon [0,2\pi) \to \mathbb{R}^2, \qquad
  t \mapsto (r \cos t ,r \sin t).
\end{equation}
Its derivative is
\begin{equation}
  \alpha' \colon (0,2\pi) \to \mathbb{R}^2, \qquad
  t \mapsto (-r \sin t ,r \cos t)
\end{equation}
and therefore
\begin{equation}
  \Vert \alpha'(t)\Vert =  r
  \qquad \text{for all } t \in (0,1).
\end{equation}
Thus the circle is smooth and regular, has constant speed, and its arc length is
\begin{equation}
  \ell = \int_{0}^{2\pi} \Vert\alpha'(t)\Vert \, dt = 2\pi r.
\end{equation}
\end{definition}

\begin{remark}[Why this example]
The circle is the standard smooth closed curve with constant speed and
curvature. It tests whether the \ac{GMM} correctly represents a
closed shape with smoothly rotating tangent and normal directions.

Moreover, every simple closed curve is homeomorphic to the circle,
making the circle the canonical prototype of a closed one dimensional
curve. Hence, if the \ac{GMM} can accurately represent the
circle and its smoothly varying geometric structure, this provides
evidence that the method can be extended to more general smooth closed
curves.
\end{remark}

\begin{figure}[h!]
  \centering
  \includegraphics[width=\linewidth]{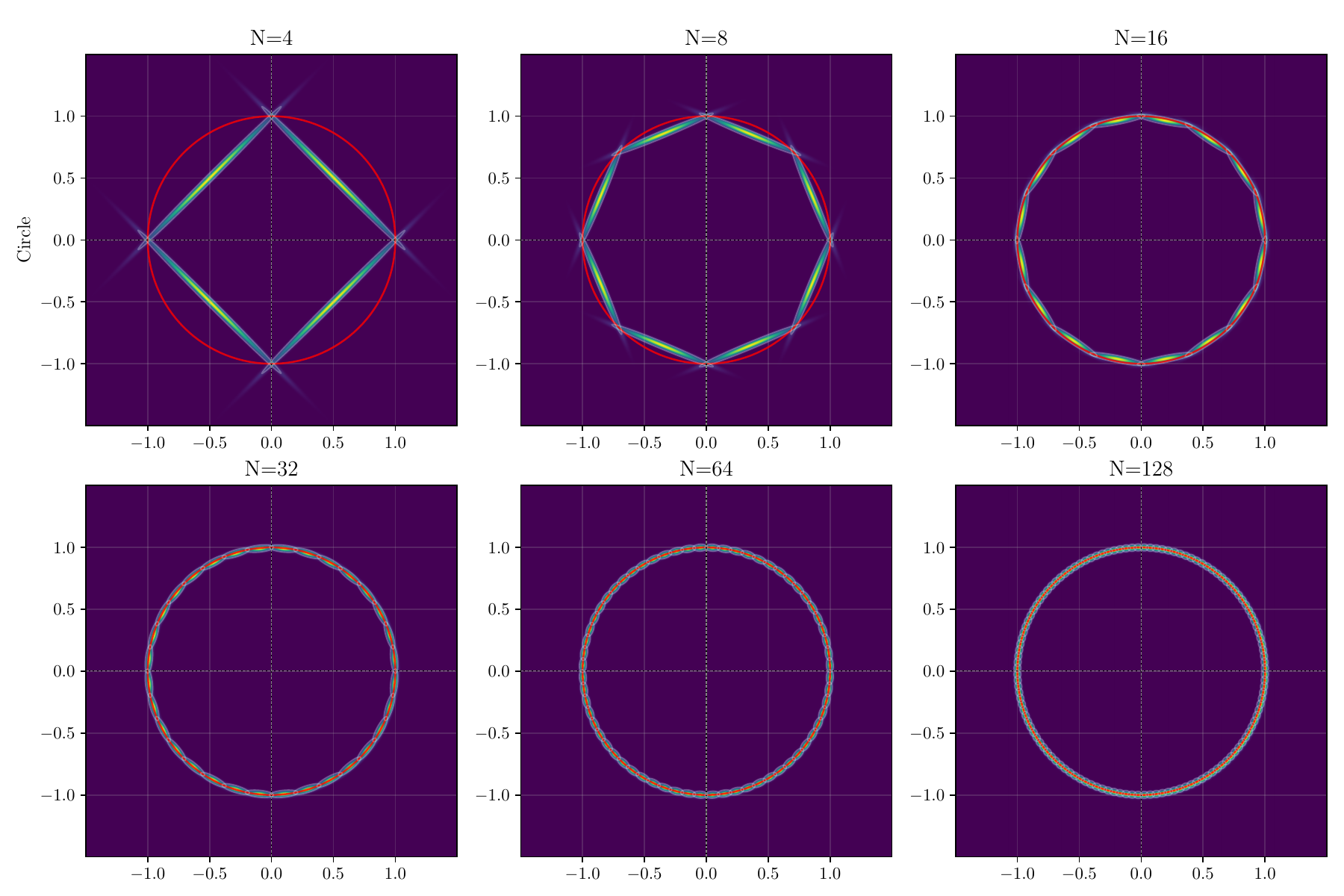}
  \caption{
  Convergence of the \ac{GMM} representation for the circle,
  a smooth regular closed curve, centered at zero of radius $r = 1$,
  as the number of components $N$ increases.
}
  \label{fig:conv-circle}
\end{figure}

\begin{figure}[h!]
  \centering
  \includegraphics[width=\linewidth]{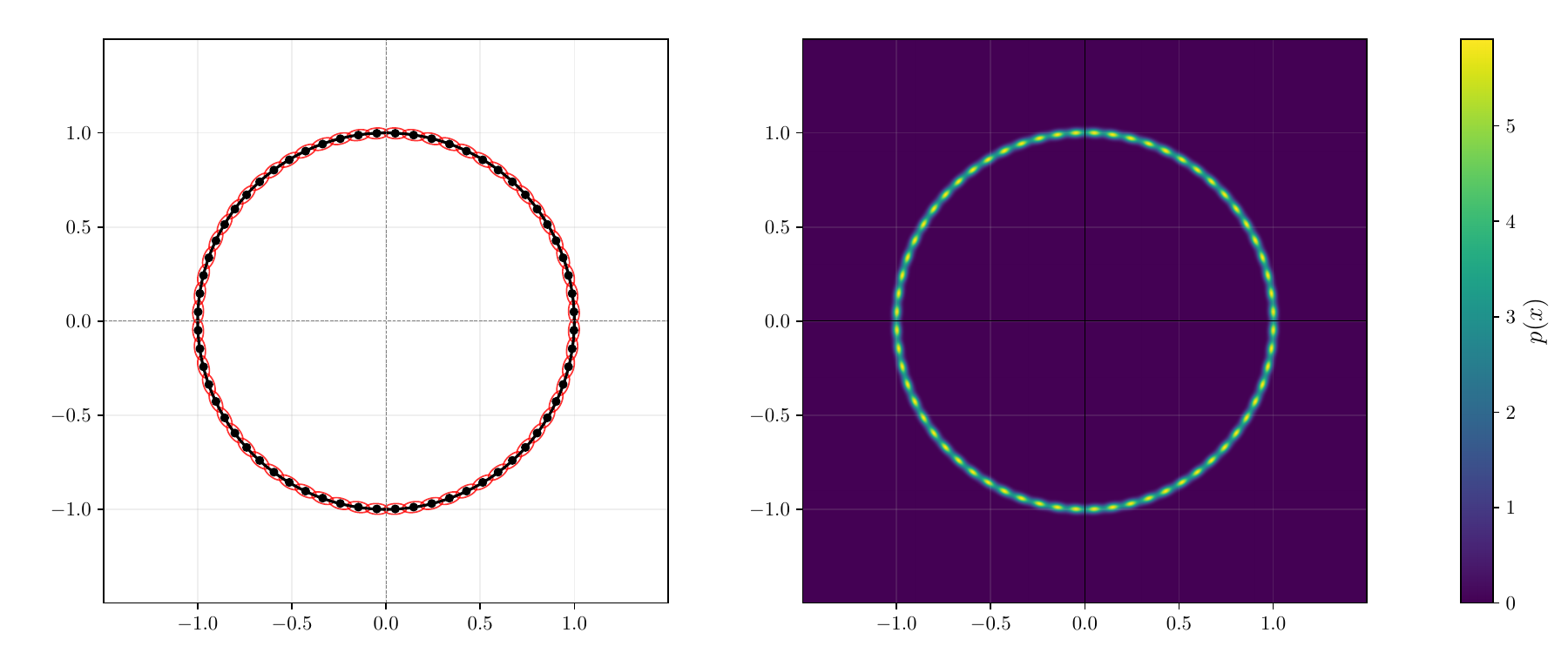}
  \caption{
  Split view for the circle,
  a smooth regular closed curve, centered at zero of radius $r = 1$: cover of the curve by ellipses that are the representatives of the covariances of the Gaussian components of the \ac{GMM} representation
  (left) and the corresponding \ac{GMM} \ac{PDF} heatmap (right).
}
  \label{fig:split-circle}
\end{figure}
\FloatBarrier
\subsection{Parabola}
\begin{definition}[Parabola]
Let $I=[t_{\min},t_{\max}] \subset \mathbb{R}$. The parabola segment is the
mapping
\begin{equation}
  \alpha : I \to \mathbb{R}^2,
  \qquad
  t \mapsto (t,\, t^2).
\end{equation}
Its derivative is the mapping
\begin{equation}
  \alpha' : I \to \mathbb{R}^2,
  \qquad
  t \mapsto (1,\, 2t).
\end{equation}
Hence
\begin{equation}
  \Vert \alpha'(t)\Vert 
  =
  \sqrt{1+4t^2}.
\end{equation}
The arc length on $I$ is
\begin{equation}
  \ell
  =
  \int_{t_{\min}}^{t_{\max}} \|\alpha'(t)\|\,dt
  =
  \int_{t_{\min}}^{t_{\max}} \sqrt{1+4t^2}\,dt.
\end{equation}
\end{definition}

\begin{remark}[Why this example]
The parabola is a standard smooth open curve with nonuniform curvature. It
illustrates behavior on an unbounded graph type curve (restricted to a finite
interval) with a clearly localized high curvature region near the vertex.
\end{remark}

\begin{figure}[h!]
  \centering
  \includegraphics[width=\linewidth]{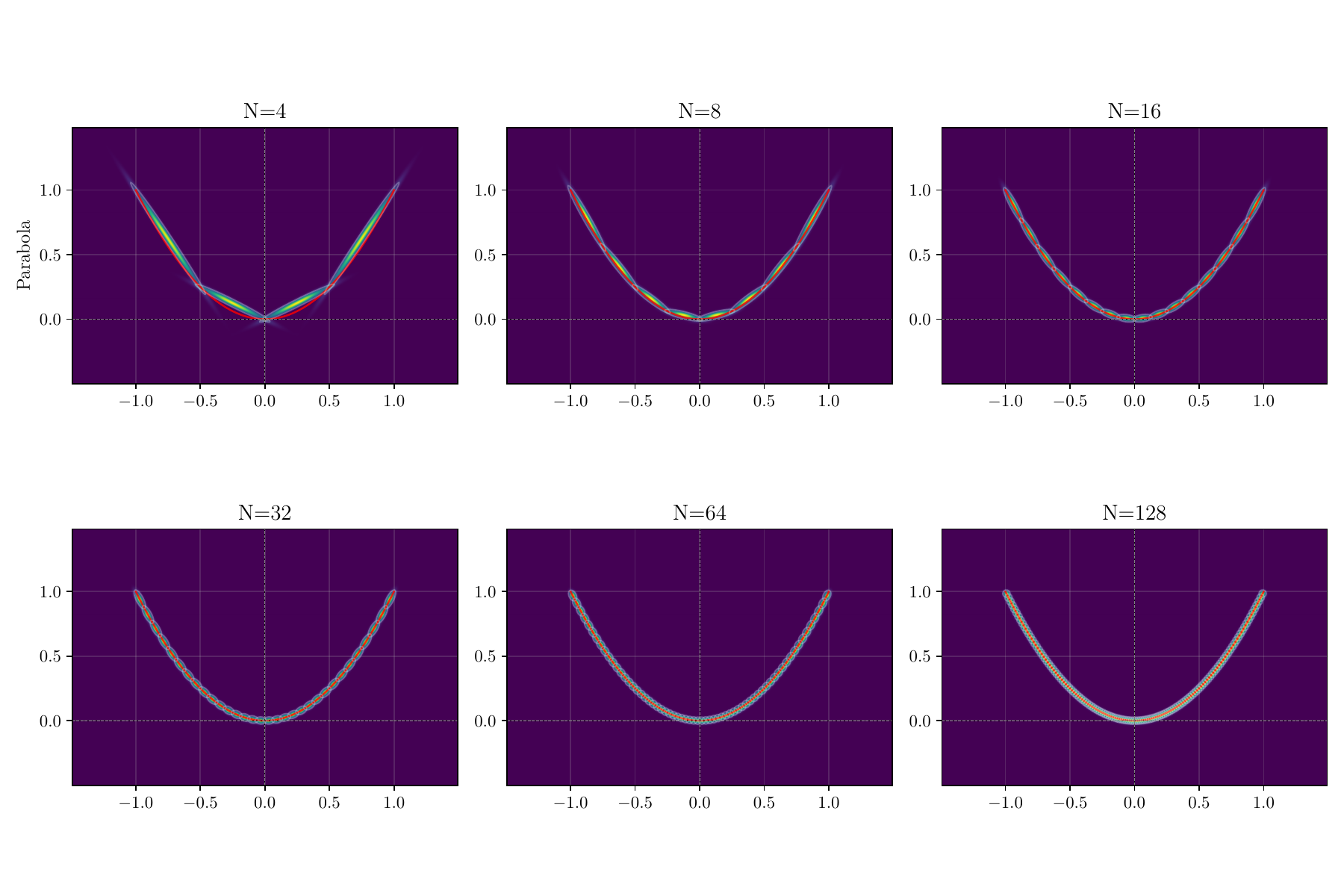}
  \caption{
  Convergence of the \ac{GMM} representation for the  parabola segment
  on $I = [-1,1]$, a smooth open graph type curve, as the number of components
  $N$ increases.
}
  \label{fig:conv-parabola}
\end{figure}

\begin{figure}[h!]
  \centering
  \includegraphics[width=\linewidth]{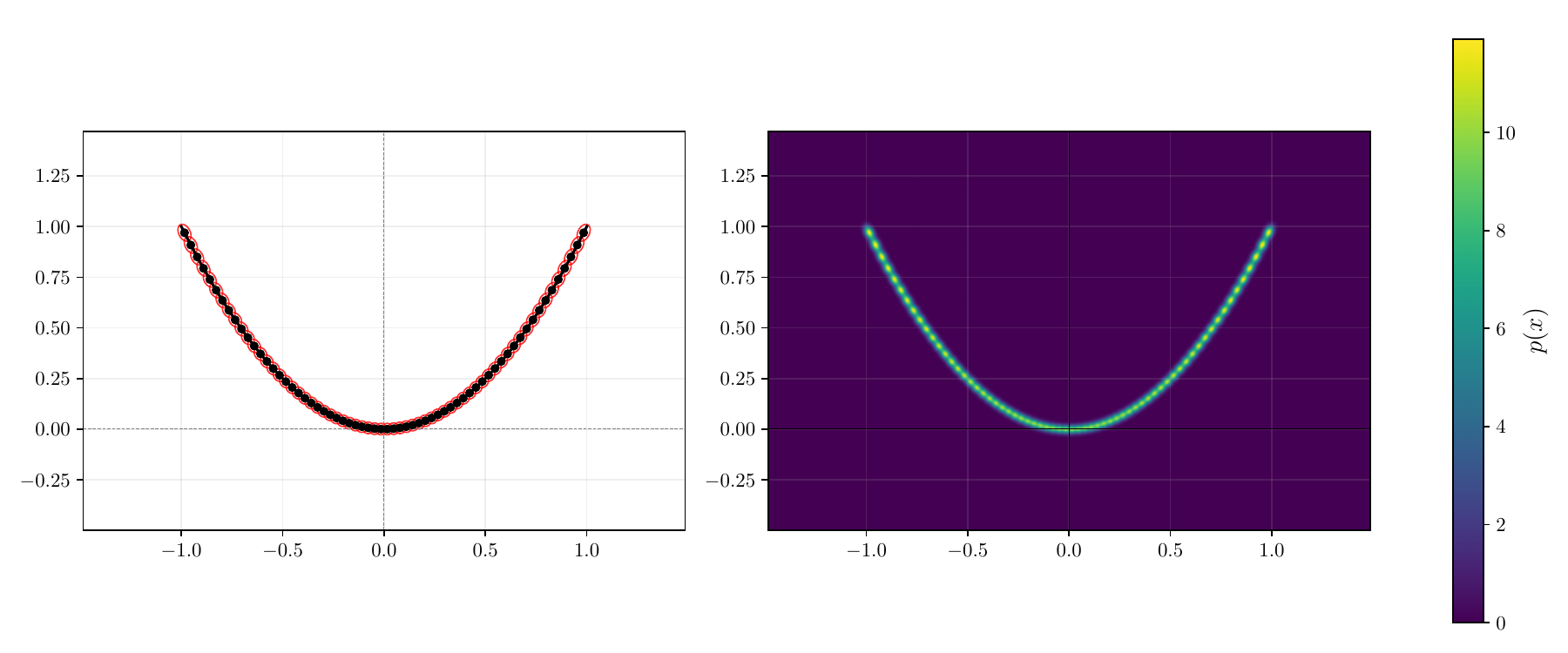}
  \caption{
  Split view for the parabola segment on $I = [-1,1]$, a smooth open graph type
  curve: cover of the curve by ellipses that are the representatives of the covariances of the Gaussian components of the \ac{GMM} representation
  (left) and the corresponding \ac{GMM} \ac{PDF} heatmap (right).
}
  \label{fig:split-parabola}
\end{figure}

\FloatBarrier

\subsection{Graph of a Function}

\begin{definition}[Curve Equivalent of Graph of a Function]
For $X=I,Y \subset \mathbb{R}$ let the mapping of interest be
\begin{equation}
  f \colon X \to Y,
  \qquad
  t \mapsto f(t)
\end{equation}
where $I$ is an interval.
The associated plane curve is the mapping
\begin{equation}
  \alpha \colon I \to \mathbb{R}^2,
  \qquad
  t \mapsto (t,\, f(t)).
\end{equation}
Its derivative is the mapping
\begin{equation}
  \alpha' \colon I \to \mathbb{R}^2,
  \qquad
  t \mapsto \bigl(1,f'(t) ).
\end{equation}
Hence
\begin{equation}
  \|\alpha'(t)\|
  =
  \sqrt{1 + f'(t)^2}.
\end{equation}
The arc length is
\begin{equation}
  \ell
  =
  \int_I \Vert \alpha'(t)\Vert\mathrm{d}t
  =
  \int_I \sqrt{1 + f'(t)^2}\mathrm{d}t.
\end{equation}
\end{definition}

\begin{remark}[Why this example]
This example shows how a real valued function on an interval can be interpreted
as a plane curve through its graph. In this way, a deterministic mapping
$t \mapsto f(t)$ is represented geometrically by the parametrized curve
$t \mapsto (t,f(t))$, which makes the \ac{GMM} representation based on the user defined probabilistic polygonal representation possible for mappings of interest where explicit mapping from $x$ to $y$ is available.
\end{remark}

\begin{figure}[h!]
  \centering
  \includegraphics[width=\linewidth]{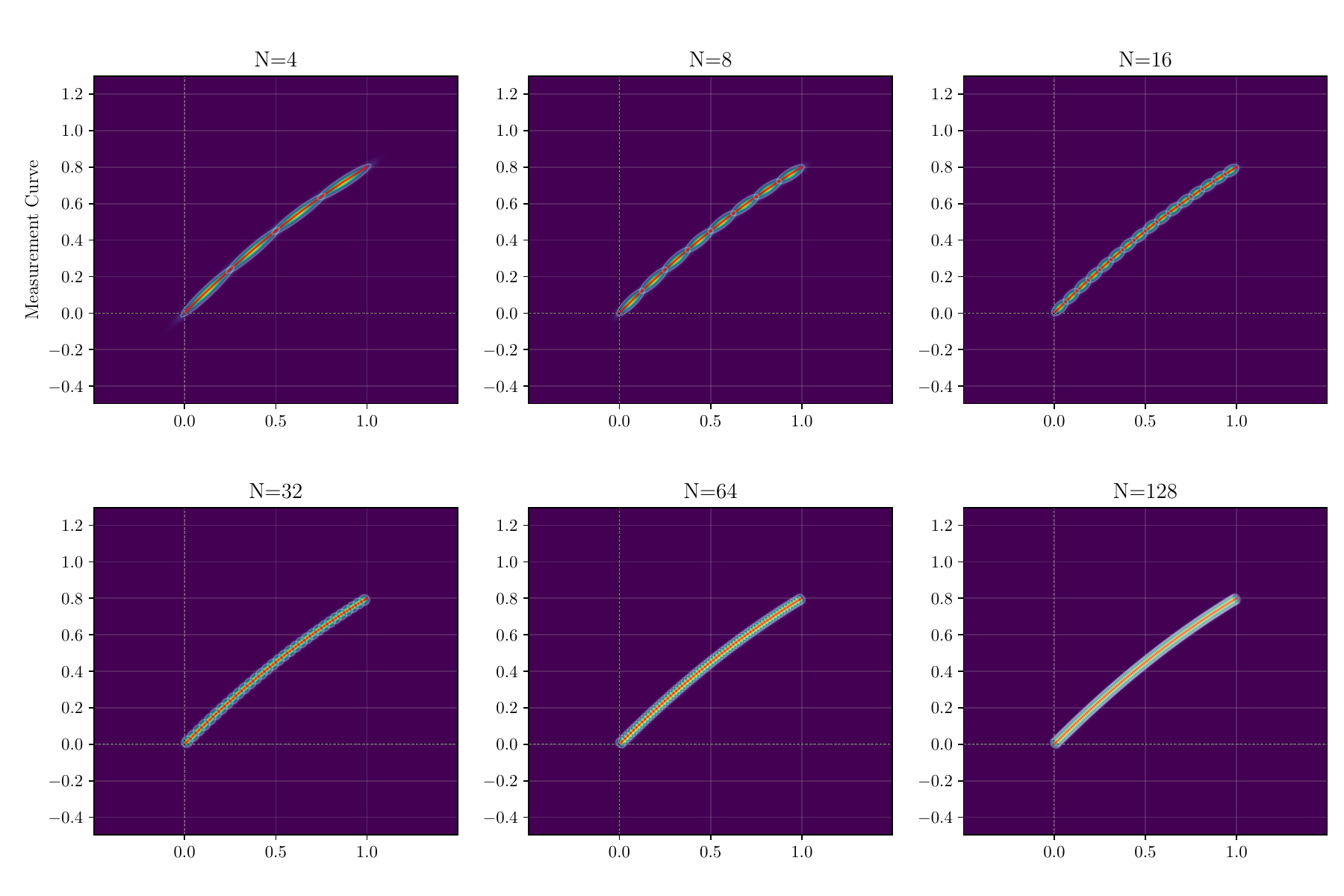}
  \caption{
  Convergence of the \ac{GMM} representation for a bijective strict monotone function, a smooth graph type curve on $I = [0,1]$, as the number
  of components $N$ increases.
}
  \label{fig:conv-measurement-curve}
\end{figure}

\begin{figure}[h!]
  \centering
  \includegraphics[width=\linewidth]{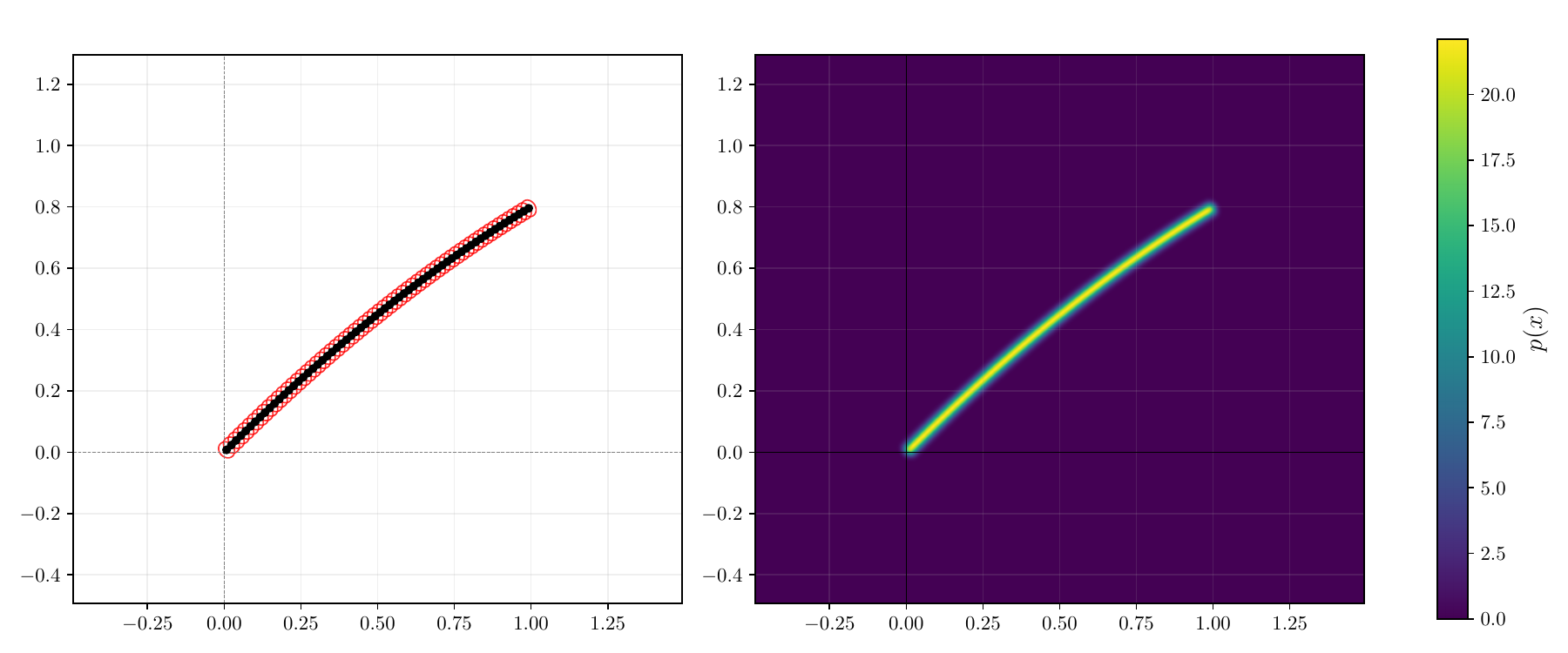}
  \caption{
  Split view for the nonlinear measurement curve, a smooth graph type curve on
  $I = [0,1]$: cover of the curve by ellipses that are the representatives of the covariances of the Gaussian components of the \ac{GMM} representation
  (left) and the corresponding \ac{GMM} \ac{PDF} heatmap (right).
}
  \label{fig:split-measurement-curve}
\end{figure}
\FloatBarrier
\subsection{Mirrored S Shaped Curve}
\begin{definition}[Mirrored S Shaped]
Let $T \in \mathbb{R}_{>0}$, then the curve
\begin{equation}
  \alpha \colon [-T,T] \to \mathbb{R}^2,
  \qquad
  t \mapsto (t,\, \tanh(-4t)).
\end{equation}
will look like a mirrored S.
Its derivative is
\begin{equation}
  \alpha' \colon (-T,T) \to \mathbb{R}^2,
  \qquad
  t \mapsto \bigl(1,\,-4\,\mathrm{sech}^2(4t)\bigr),
\end{equation}
and therefore
\begin{equation}
  \|\alpha'(t)\|
  =
  \sqrt{1 + 16\,\mathrm{sech}^4(4t)}.
\end{equation}
Since the first component of $\alpha'(t)$ is identically equal to $1$, the curve is regular for all $t \in (-1,1)$.
Its arc length is
\begin{equation}
  \ell
  =
  \int_{-T}^{T} \|\alpha'(t)\| \,\mathrm{d}t
  =
  \int_{-T}^{T} \sqrt{1 + 16\,\mathrm{sech}^4(4t)} \,\mathrm{d}t.
\end{equation}
The second derivative is
\begin{equation}
  \alpha'' \colon [-T,T] \to \mathbb{R}^2,
  \qquad
  t \mapsto \bigl(0,\,32\,\mathrm{sech}^2(4t)\tanh(4t)\bigr).
\end{equation}
Thus, the curve is smooth and regular, with a central transition region near
$t=0$ where the second derivative changes sign.
\end{definition}

\begin{remark}[Why this example]
This is a smooth open ``Mirrored S Shaped'' curve with a strong central transition and
flatter outer parts. It shows how the \ac{GMM} captures smoothly
varying tangent directions and curvature in a regular graph type curve.
\end{remark}

\begin{figure}[h!]
  \centering
  \includegraphics[width=\linewidth]{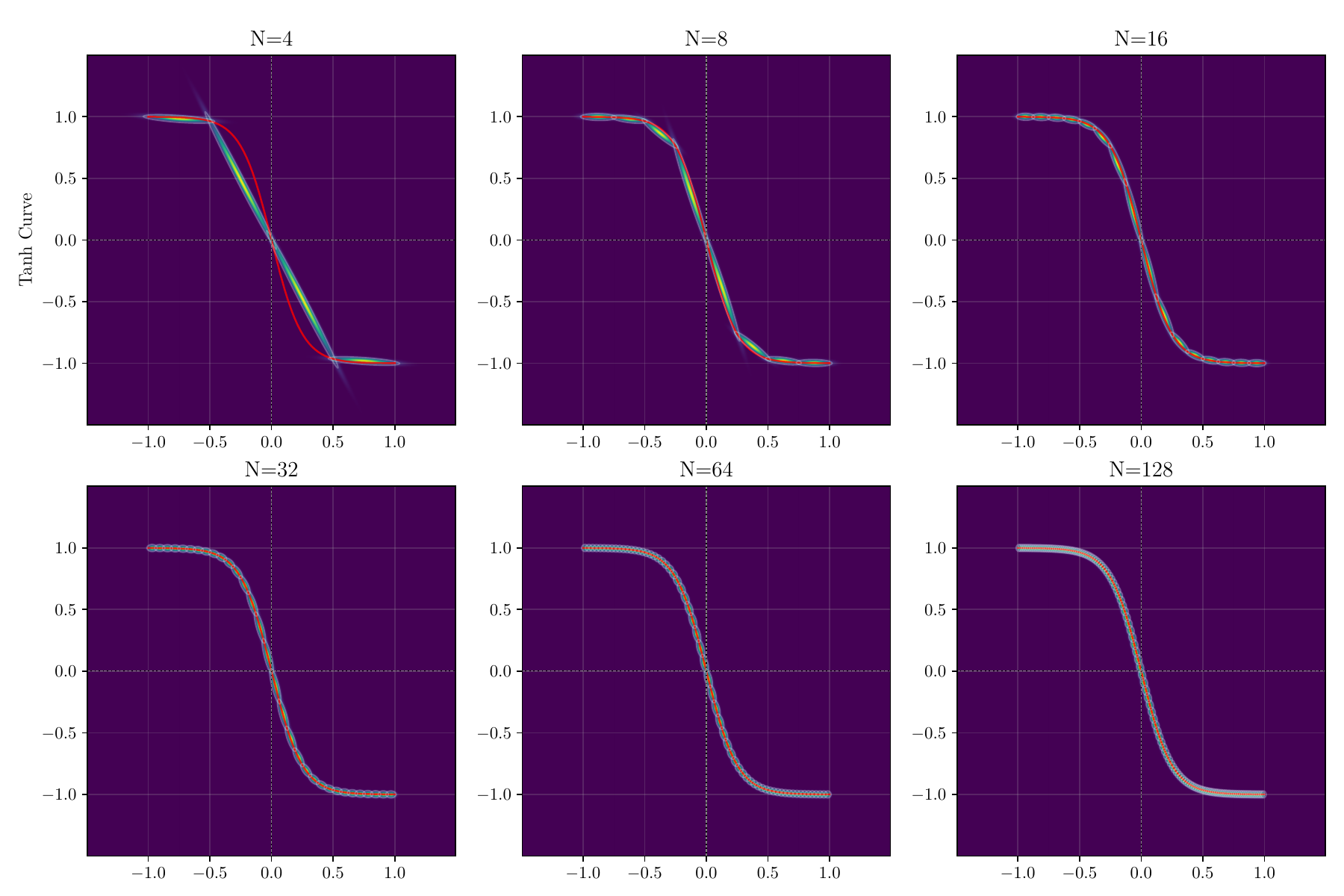}
  \caption{
  Convergence of the \ac{GMM} representation for the mirrored S-shaped
  $\tanh$ based curve, a smooth regular graph type curve on $[-1,1]$, as the number
  of components $N$ increases.
}
  \label{fig:conv-tanh-curve}
\end{figure}

\begin{figure}[h!]
  \centering
  \includegraphics[width=\linewidth]{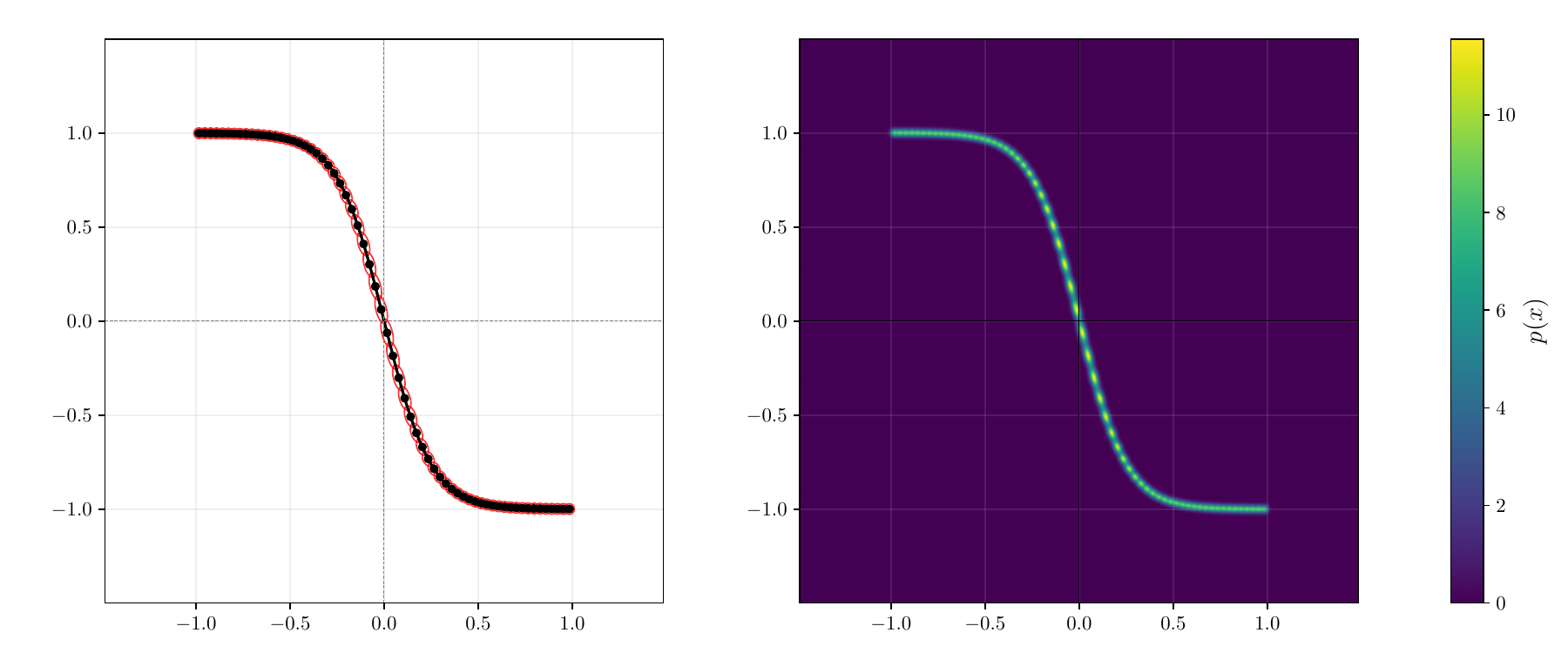}
  \caption{
  Split view for the mirrored S-shaped $\tanh$ curve, a smooth regular
  graph type curve on $[-1,1]$: cover of the curve by ellipses that are the representatives of the covariances of the Gaussian components of the \ac{GMM} representation
  (left) and the corresponding \ac{GMM} \ac{PDF} heatmap (right).
}
  \label{fig:split-tanh-curve}
\end{figure}
\FloatBarrier
\subsection{Downward Logarithmic Spiral}
\begin{definition}[Downward Logarithmic Spiral]
Let $a>0$ and $b\in\mathbb{R}_{<0}$, and let
$I=[t_{\min},t_{\max}] \subset \mathbb{R}$.
The downward logarithmic spiral is the mapping
\begin{equation}
  \alpha \colon I \to \mathbb{R}^2,
  \qquad
  t \mapsto
  \bigl(ae^{bt}\cos t,\, ae^{bt}\sin t\bigr).
\end{equation}
Its derivative is the mapping
\begin{equation}
  \alpha' \colon I \to \mathbb{R}^2,
  \qquad
  t \mapsto
  \bigl(ae^{bt}(b\cos t-\sin t),\, ae^{bt}(b\sin t+\cos t)\bigr).
\end{equation}
Therefore
\begin{equation}
  \|\alpha'(t)\|
  =
  ae^{bt}\sqrt{1+b^2}.
\end{equation}
Hence the curve is smooth and regular on $I$, and its arc length is
\begin{equation}
  \ell
  =
  \int_I \|\alpha'(t)\| \,\mathrm{d}t
  =
  a\sqrt{1+b^2}\int_I e^{bt}\,\mathrm{d}t.
\end{equation}
In particular, since $I$ is compact and $b \in \mathbb{R}_{<0}$, the curve is rectifiable.
\end{definition}

\begin{remark}[Why this example]
The downward logarithmic spiral is a smooth curve with continuously rotating tangents
and exponentially changing scale. It shows that the representation handles
rectifiable curves (infinite intervals but bounded arc length) and
adapts to strong variation in scale along the path.
\end{remark}

\begin{figure}[h!]
  \centering
  \includegraphics[width=\linewidth]{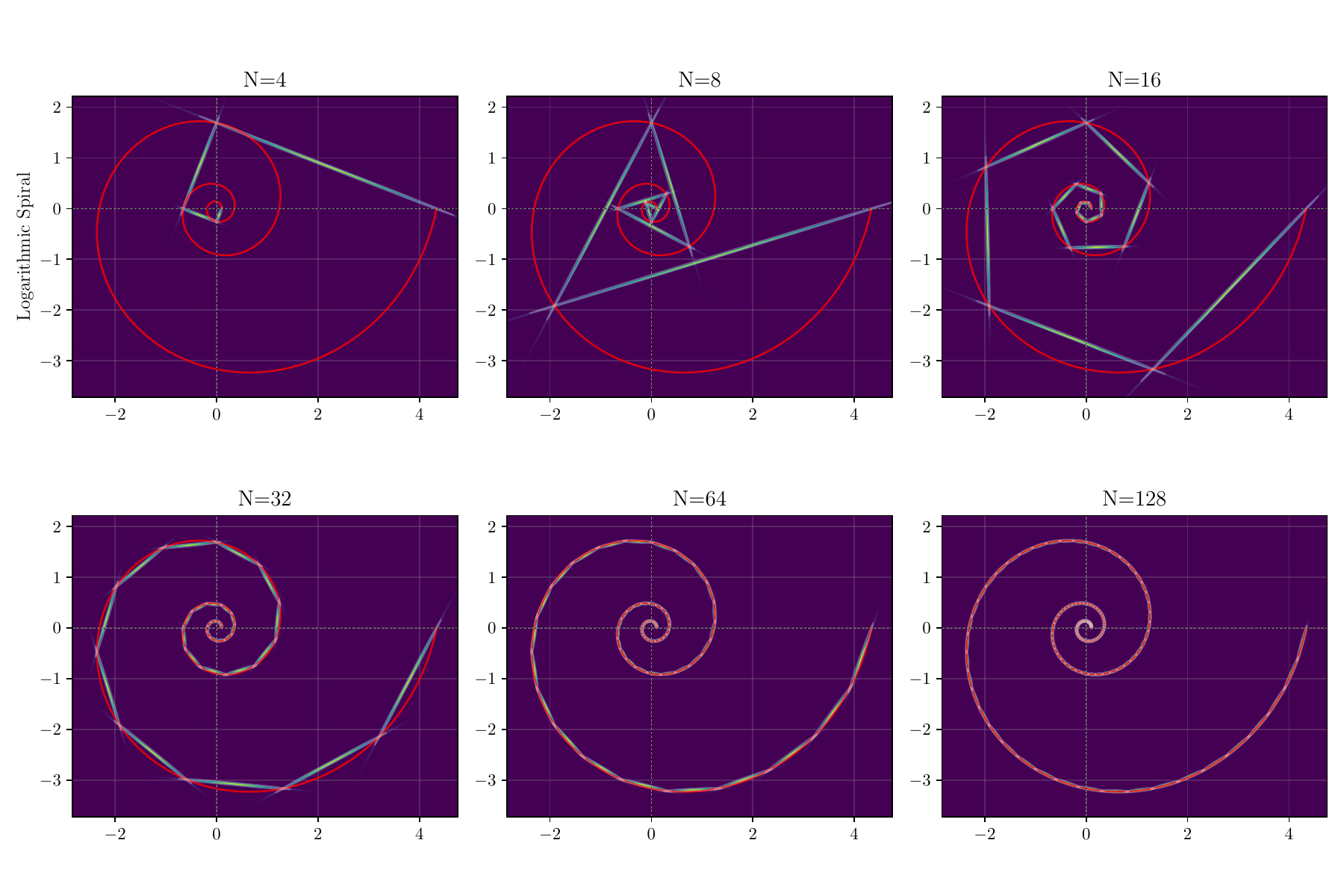}
  \caption{
  Convergence of the \ac{GMM} representation for the downward logarithmic spiral,
  a smooth regular curve with parameters $a = 0.1$, $b = 0.2$ on
  $I = [0,6\pi]$, as the number of components $N$ increases.
}
  \label{fig:conv-logarithmic-spiral}
\end{figure}

\begin{figure}[h!]
  \centering
  \includegraphics[width=\linewidth]{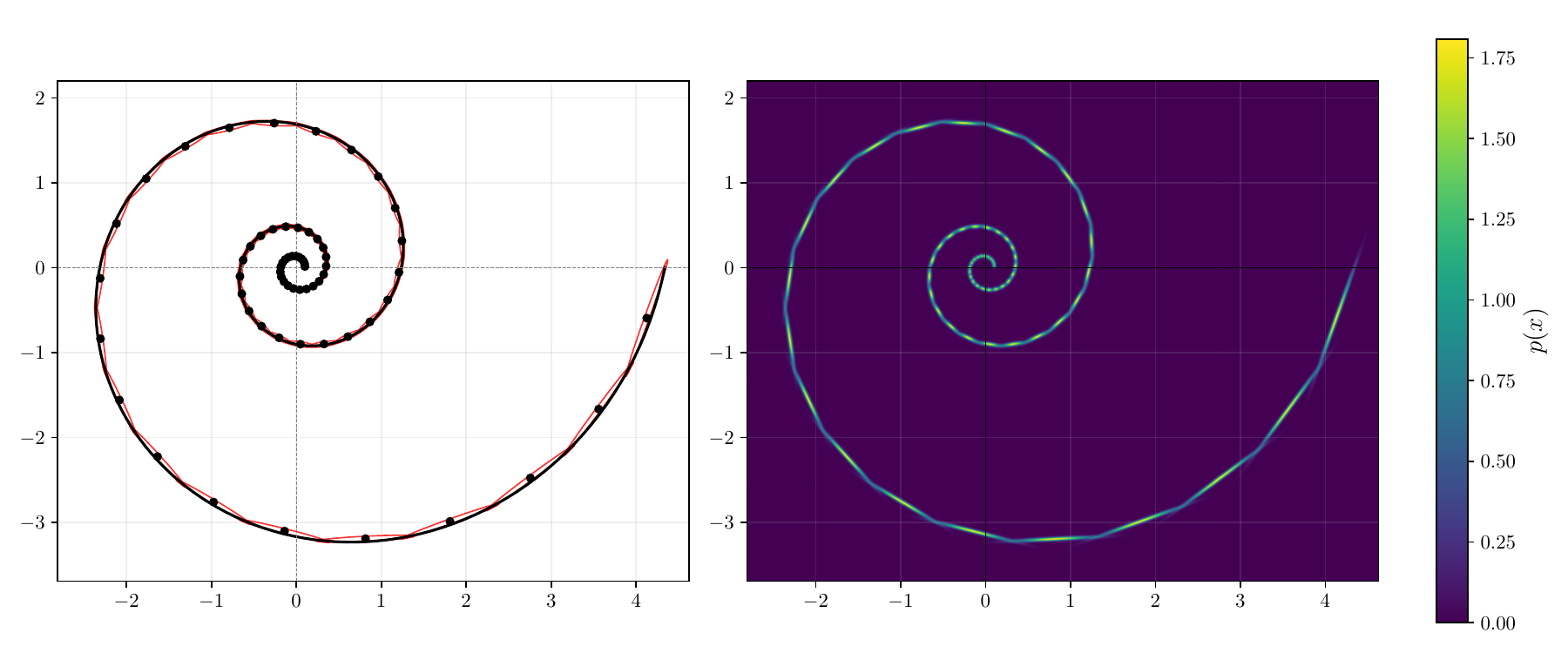}
  \caption{
  Split view for the logarithmic spiral, a smooth regular spiral curve with
  parameters $a = 0.1$, $b = 0.2$ on $I = [0,6\pi]$: cover of the curve by ellipses that are the representatives of the covariances of the Gaussian components of the \ac{GMM} representation
  (left) and the corresponding \ac{GMM} \ac{PDF} heatmap (right).
}
  \label{fig:split-logarithmic-spiral}
\end{figure}
\FloatBarrier
\subsection{Semicubical Cusp}
\begin{definition}[Semicubical Cusp]
Let $T \in \mathbb{R}_{>0}$. The semicubical cusp is the mapping
\begin{equation}
  \alpha \colon [-T,T] \to \mathbb{R}^2,
  \qquad
  t \mapsto (t^2,t^3).
\end{equation}
Its derivative is the mapping
\begin{equation}
  \alpha' \colon [-T,T] \to \mathbb{R}^2,
  \qquad
  t \mapsto (2t,3t^2),
\end{equation}
therefore
\begin{equation}
  \|\alpha'(t)\|
  =
  \sqrt{4t^2+9t^4}.
\end{equation}
The curve is regular for $t\neq 0$ and has a cusp at $t=0$, where
\begin{equation}
  \alpha'(0) = (0,0).
\end{equation}
Its arc length is
\begin{equation}
  \ell
  =
  \int_{-1}^{1} \|\alpha'(t)\| \,\mathrm{d}t
  =
  \int_{-1}^{1} \sqrt{4t^2+9t^4}\,\mathrm{d}t.
\end{equation}
\end{definition}

\begin{remark}[Why this example]
The cusp is the simplest example of a non regular curve with one singular
point. It shows that the \ac{GMM} construction remains meaningful
when the tangent direction is not defined at an isolated point. It also hints at the possibility that the \ac{GMM} construction can handle finitely many non regular points too.
\end{remark}

\begin{figure}[h!]
  \centering
  \includegraphics[width=\linewidth]{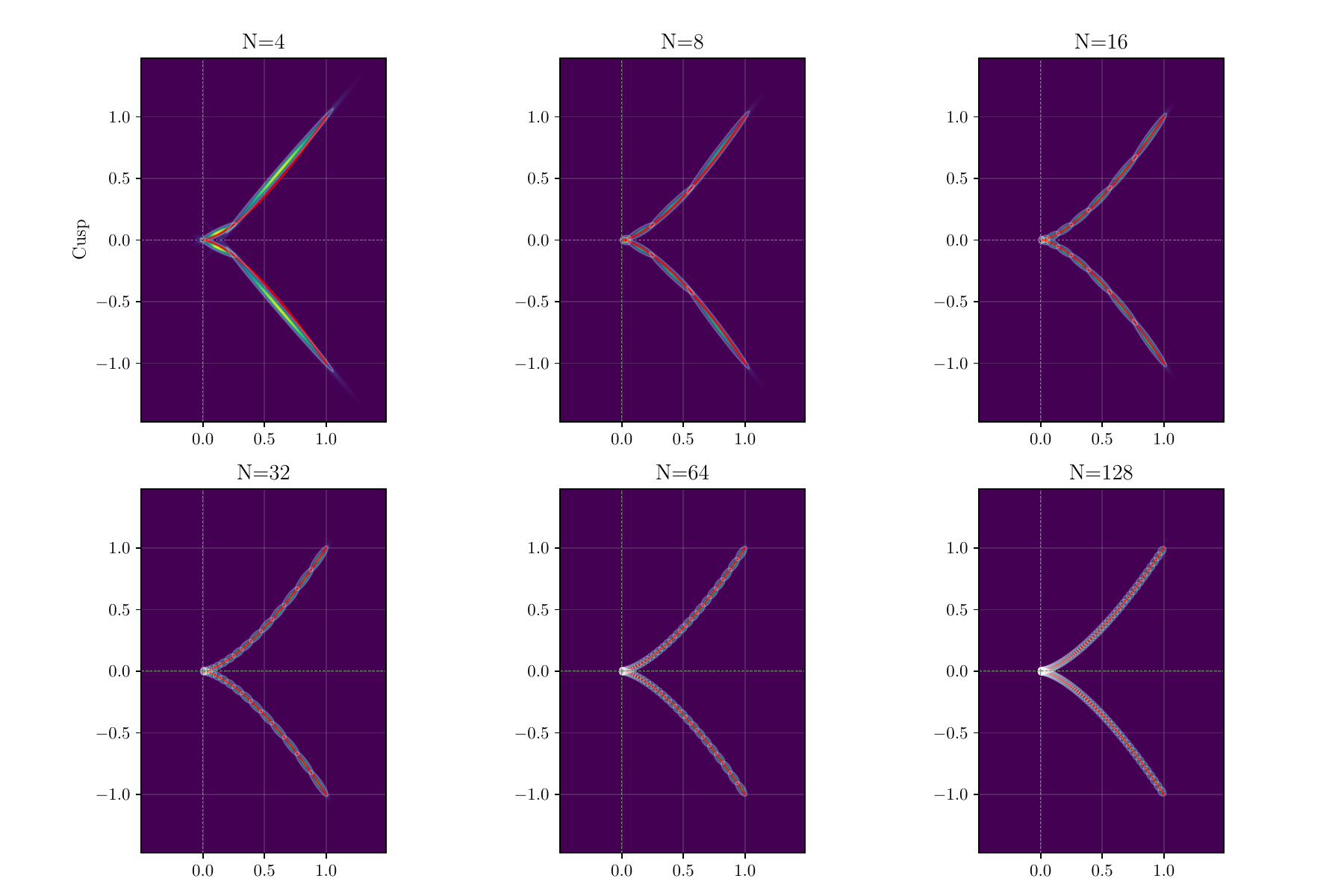}
  \caption{
  Convergence of the \ac{GMM} representation for the semicubical cusp,
  a non regular curve on $[-1,1]$ with an isolated cusp at $t = 0$, as the
  number of components $N$ increases.
}
  \label{fig:conv-cusp}
\end{figure}

\begin{figure}[h!]
  \centering
  \includegraphics[width=\linewidth]{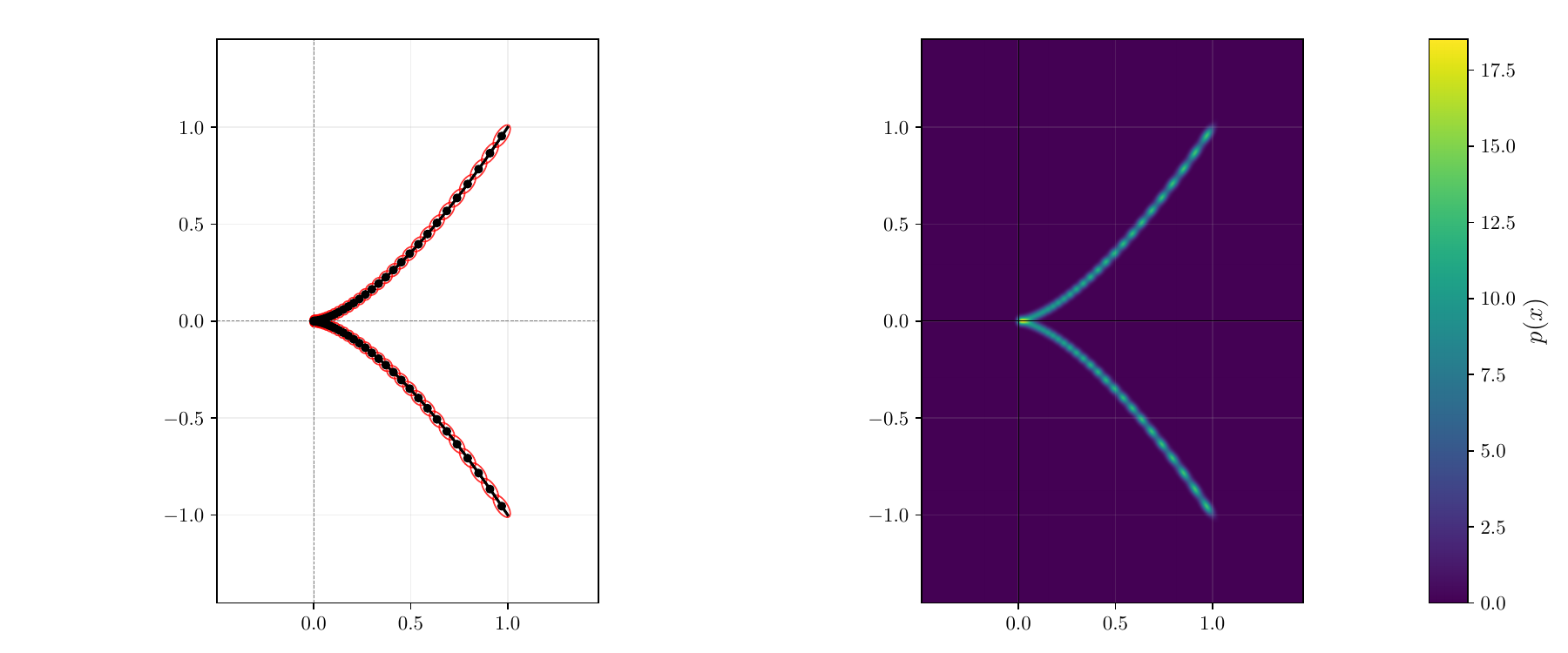}
  \caption{
  Split view for the semicubical cusp, a non regular curve on $[-1,1]$ with an
  isolated cusp at $t = 0$: cover of the curve by ellipses that are the representatives of the covariances of the Gaussian components of the \ac{GMM} representation
  (left) and the corresponding \ac{GMM} \ac{PDF} heatmap (right).
}
  \label{fig:split-cusp}
\end{figure}
\FloatBarrier
\subsection{Cycloid}
\begin{definition}[Cycloid]
Let $r \in \mathbb{R}_{>0},t_{\mathrm{min}},t_{\mathrm{max}} \in \mathbb{R}$ .
The cycloid is the mapping
\begin{equation}
  \alpha \colon [t_{\mathrm{min}},t_{\mathrm{max}}] \to \mathbb{R}^2,
  \qquad
  t \mapsto \bigl(r(t-\sin t),\, r(1-\cos t)\bigr).
\end{equation}
Its derivative is the mapping
\begin{equation}
  \alpha' \colon (t_{\mathrm{min}},t_{\mathrm{max}}) \to \mathbb{R}^2,
  \qquad
  t \mapsto \bigl(r(1-\cos t),\, r\sin t\bigr).
\end{equation}
Therefore
\begin{equation}
  \Vert\alpha'(t)\Vert
  =
  2r\left|\sin\!\left(\frac{t}{2}\right)\right|.
\end{equation}
The curve is regular for all
$t \in [t_{\mathrm{min}},t_{\mathrm{max}}] \setminus \{2\pi k \colon k \in \mathbb{Z}\}$,
and it has cusp points whenever $t=2\pi k$ lies in $[t_{\mathrm{min}},t_{\mathrm{max}}]$, since then
\begin{equation}
  \alpha'(2\pi k) = (0,0).
\end{equation}
Hence its arc length is
\begin{equation}
  \ell
  =
  \int_{t_{\mathrm{min}}}^{t_\mathrm{max}} 2r\left|\sin\!\left(\frac{t}{2}\right)\right|\mathrm{d}t.
\end{equation}
\end{definition}

\begin{remark}[Why this example]
The cycloid combines smooth arcs with cusp singularities in a periodic
fashion. It extends the single cusp example to a repeated pattern and shows
how the mixture handles alternating regular and singular behavior along a
trajectory.
\end{remark}

\begin{figure}[h!]
  \centering
  \includegraphics[width=\linewidth]{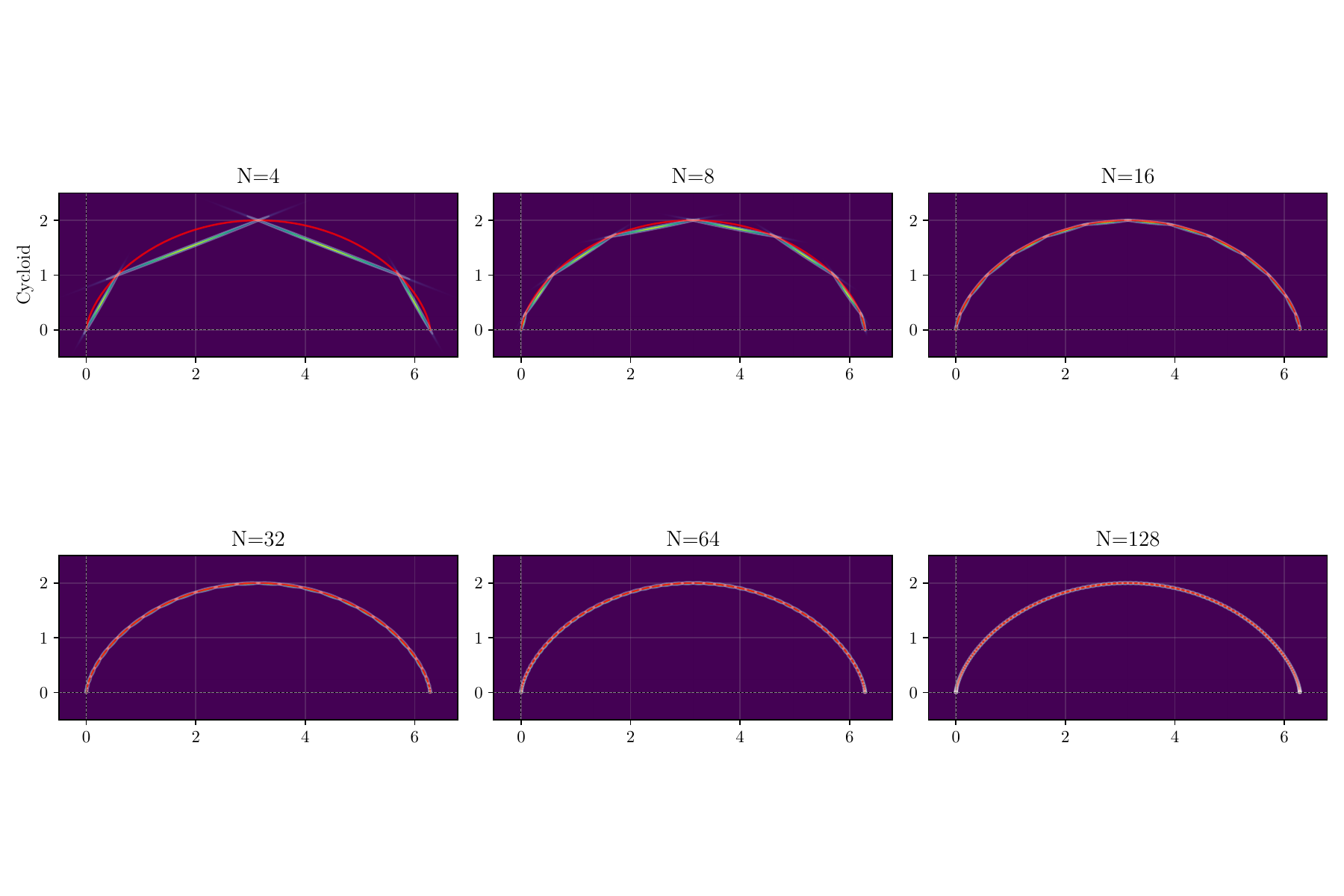}
  \caption{
  Convergence of the \ac{GMM} representation for the cycloid, a curve
  with cusp singularities and with the parameters $r = 1,t_{\mathrm{min}}=0,t_{\mathrm{max}} = 2\pi$, as the
  number of components $N$ increases.
}
  \label{fig:conv-cycloid}
\end{figure}

\begin{figure}[h!]
  \centering
  \includegraphics[width=\linewidth]{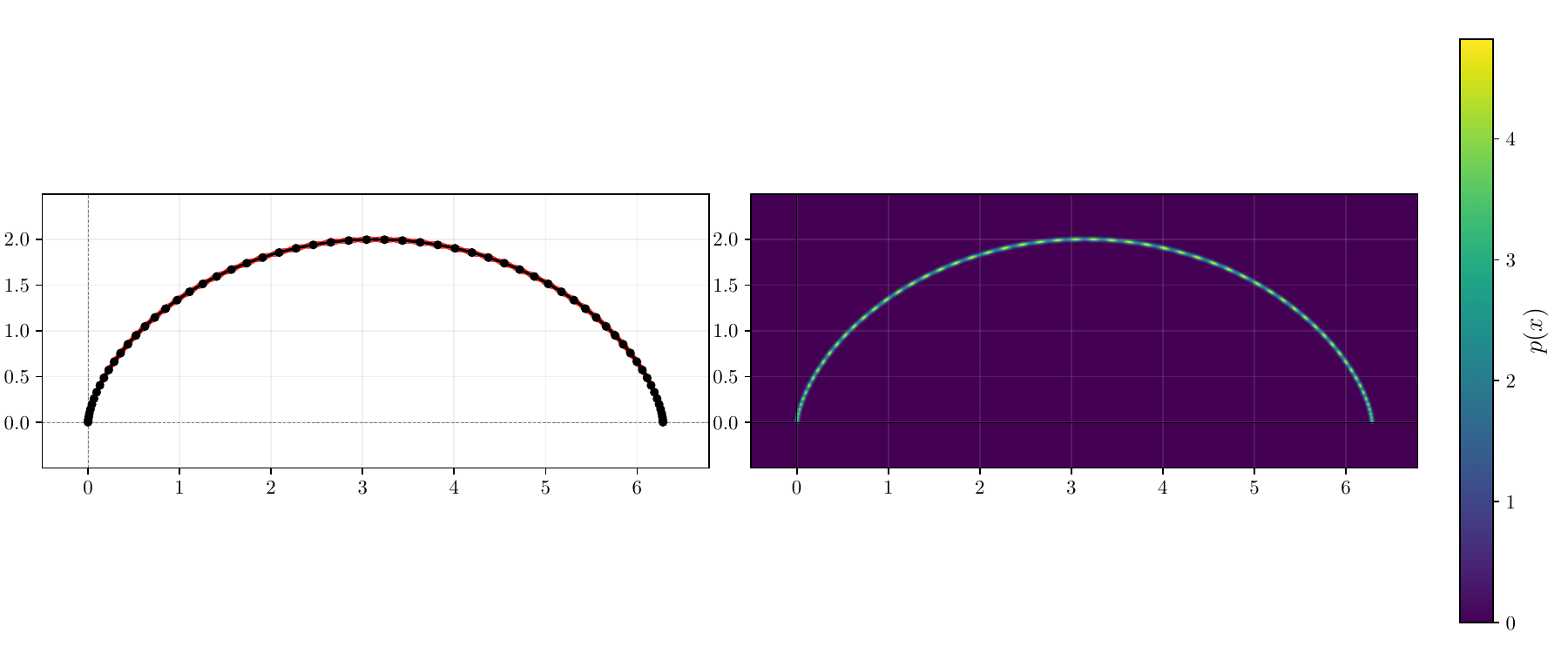}
  \caption{
  Split view for the cycloid, a curve with cusp singularities and radius
  $r = 1$ over one arch $T = 2\pi$: cover of the curve by ellipses that are the representatives of the covariances of the Gaussian components of the \ac{GMM} representation
  (left) and the corresponding \ac{GMM} \ac{PDF} heatmap (right).
}
\label{fig:split-cycloid}
\end{figure}
\FloatBarrier
\subsection{Square}
\begin{definition}[Square]
Let $L\in \mathbb{R}_{>0}$, and
\begin{equation}
  v_1 = \left(-\frac{L}{2},-\frac{L}{2}\right),\qquad
  v_2 = \left(\frac{L}{2},-\frac{L}{2}\right),\qquad
  v_3 = \left(\frac{L}{2},\frac{L}{2}\right),\qquad
  v_4 = \left(-\frac{L}{2},\frac{L}{2}\right).
\end{equation}
The boundary of the axis aligned square of side length $L$ centered at the origin
is the closed polygonal curve through the vertices $v_1,v_2,v_3,v_4,v_1$.
A convenient piecewise parametrization is the mapping
\begin{equation}
  \alpha \colon [0,4] \to \mathbb{R}^2,
  \qquad
  t \mapsto
  \begin{cases}
    \left(-\dfrac{L}{2} + Lt,\,-\dfrac{L}{2}\right),
    & t \in [0,1], \\[1.5ex]
    \left(\dfrac{L}{2},\,-\dfrac{L}{2} + L(t-1)\right),
    & t \in [1,2], \\[1.5ex]
    \left(\dfrac{L}{2} - L(t-2),\,\dfrac{L}{2}\right),
    & t \in [2,3], \\[1.5ex]
    \left(-\dfrac{L}{2},\,\dfrac{L}{2} - L(t-3)\right),
    & t \in [3,4].
  \end{cases}
\end{equation}
Its derivative on each open subinterval is given by
\begin{equation}
  \alpha'(t)
  =
  \begin{cases}
    (L,0),
    & t \in (0,1), \\[1.5ex]
    (0,L),
    & t \in (1,2), \\[1.5ex]
    (-L,0),
    & t \in (2,3), \\[1.5ex]
    (0,-L),
    & t \in (3,4).
  \end{cases}
\end{equation}
Hence
\begin{equation}
  \|\alpha'(t)\| = L
\end{equation}
for all $t \in (0,1)\cup(1,2)\cup(2,3)\cup(3,4)$.
Thus, the curve is regular on each open edge, but it is not differentiable at the
corner parameters $t=1,2,3$ and at the identified endpoints $t=0,4$.
Its total arc length is
\begin{equation}
  \ell
  =
  \sum_{k=1}^4 \int_{k-1}^{k} \|\alpha'(t)\| \,\mathrm{d}t
  =
  4L.
\end{equation}
\end{definition}

\begin{remark}[Why this example]
The square boundary is already a polygonal curve. It shows the special case
where the geometric primitives of the construction coincide exactly with the
curve segments and introduces corners where the tangent direction jumps.
\end{remark}

 \begin{figure}[h!]
   \centering
   \includegraphics[width=\linewidth]{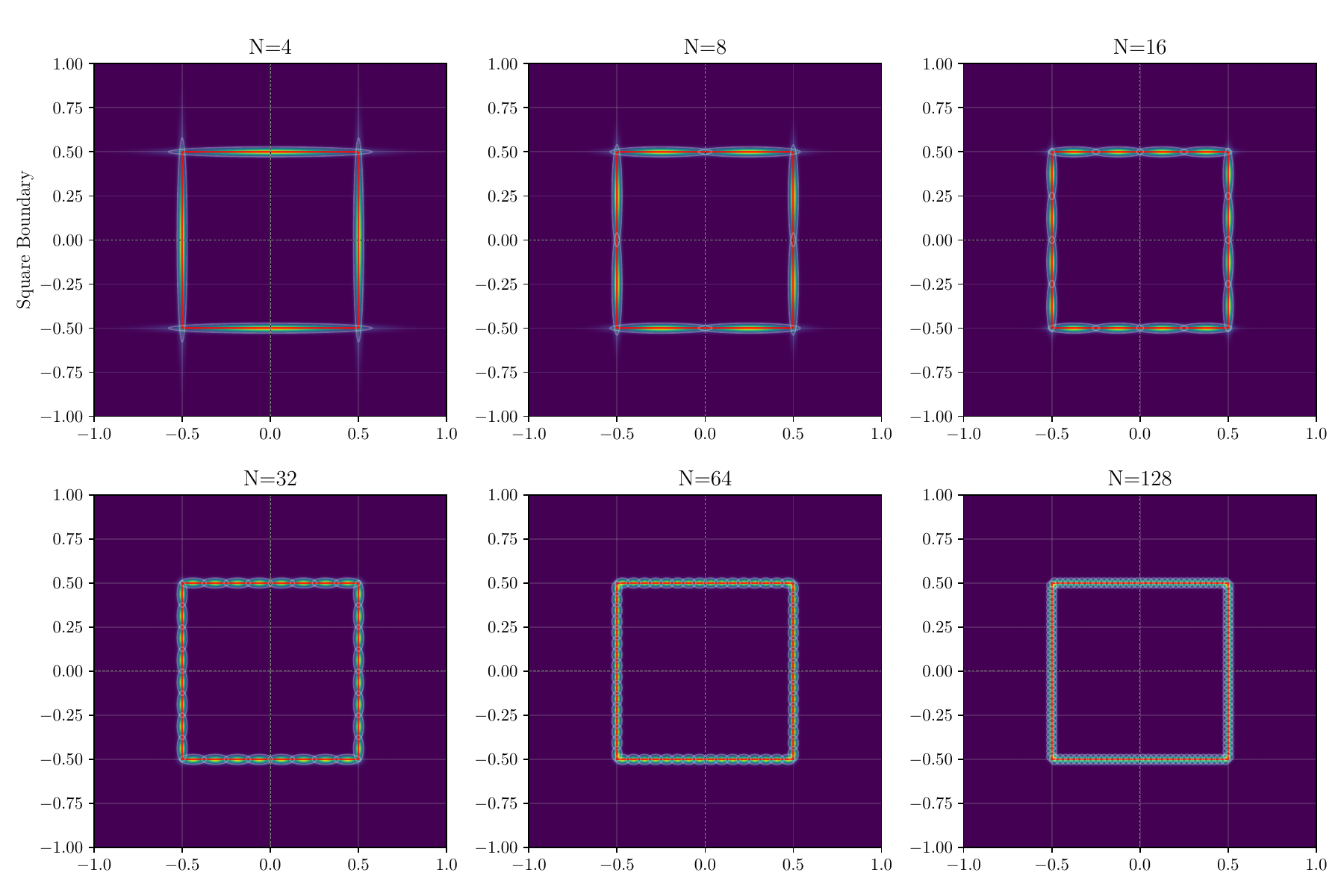}
   \caption{
  Convergence of the \ac{GMM} representation for the square curve,
  a closed polygonal curve of side length $L = 1$, as the number of components
  $N$ increases.
}
   \label{fig:conv-square}
 \end{figure} 

 \begin{figure}[h!]
   \centering
   \includegraphics[width=\linewidth]{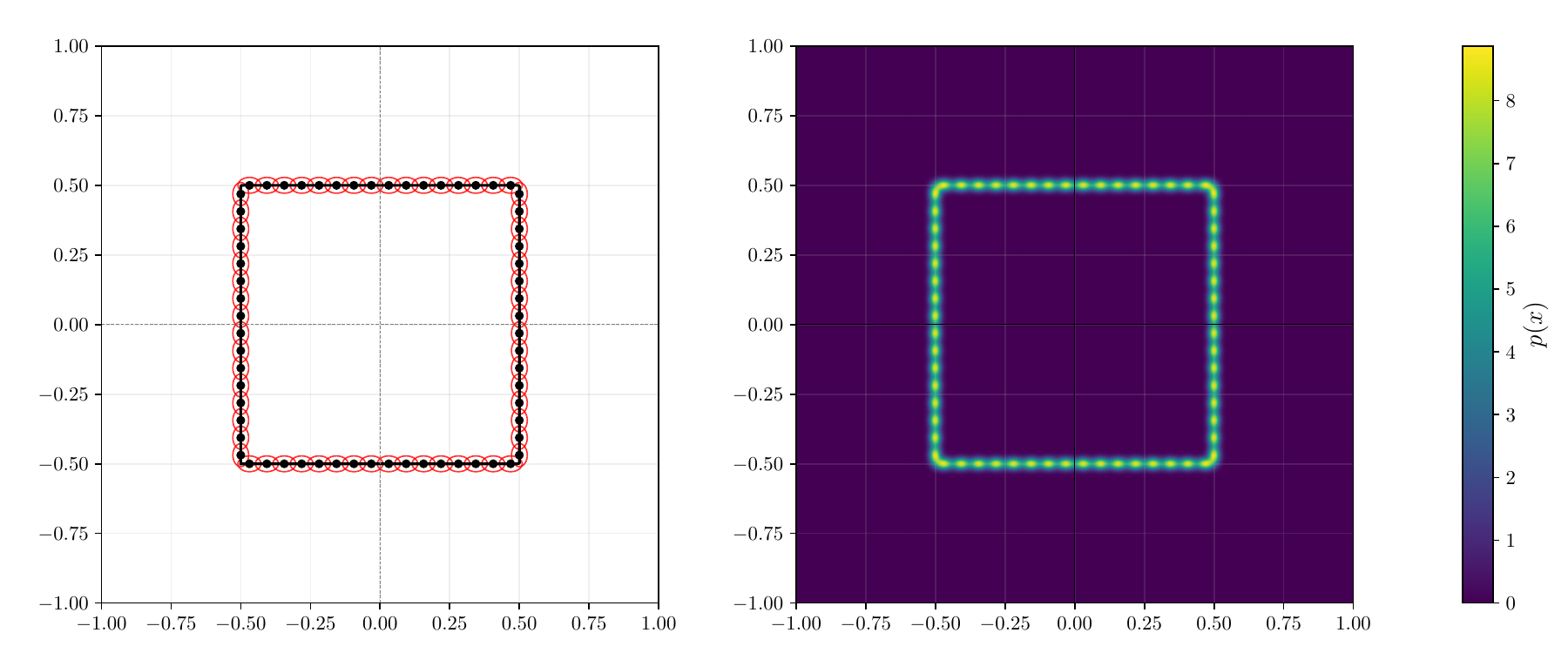}
   \caption{
  Split view for the square boundary, a closed polygonal curve of side length
  $L = 1$: cover of the curve by ellipses that are the representatives of the covariances of the Gaussian components of the \ac{GMM} representation
  (left) and the corresponding \ac{GMM} \ac{PDF} heatmap (right).
}
\label{fig:split-square}
\end{figure}
\FloatBarrier
\subsection{Astroid}
\begin{definition}[Astroid]
Let $a\in \mathbb{R}_{>0}$.
The astroid is the mapping
\begin{equation}
  \alpha \colon [0,2\pi) \to \mathbb{R}^2,
  \qquad
  t \mapsto \bigl(a\cos^3 t,\, a\sin^3 t\bigr).
\end{equation}
Its derivative is the mapping
\begin{equation}
  \alpha' \colon (0,2\pi) \to \mathbb{R}^2,
  \qquad
  t \mapsto \bigl(-3a\cos^2 t \sin t,\, 3a\sin^2 t \cos t\bigr).
\end{equation}
Therefore
\begin{equation}
  \|\alpha'(t)\|
  =
  3a\,|\sin t \cos t|.
\end{equation}
The curve is regular for all
\begin{equation}
  t \in [0,2\pi] \setminus \left\{0,\frac{\pi}{2},\pi,\frac{3\pi}{2},2\pi\right\},
\end{equation}
and it fails to be regular at the parameter values
\begin{equation}
  t \in \left\{0,\frac{\pi}{2},\pi,\frac{3\pi}{2},2\pi\right\},
\end{equation}
where
\begin{equation}
  \alpha'(t) = (0,0).
\end{equation}
These parameter values correspond to the four cusp singularities of the astroid.
Its arc length is
\begin{equation}
  \ell
  =
  \int_0^{2\pi} \|\alpha'(t)\| \,\mathrm{d}t
  =
  \int_0^{2\pi} 3a\,|\sin t \cos t| \,\mathrm{d}t
  =
  6a.
\end{equation}
Its image is a closed algebraic curve with four cusps located at
\begin{equation}
  (\pm a,0),
  \qquad
  (0,\pm a).
\end{equation}
\end{definition}

\begin{remark}[Why this example]
The astroid is a closed curve with four cusp singularities and high symmetry.
It demonstrates how the Gaussian mixture captures repeated cusp behavior and a
non smooth closed shape with several sharp features.
\end{remark}

\begin{figure}[h!]
  \centering
  \includegraphics[width=\linewidth]{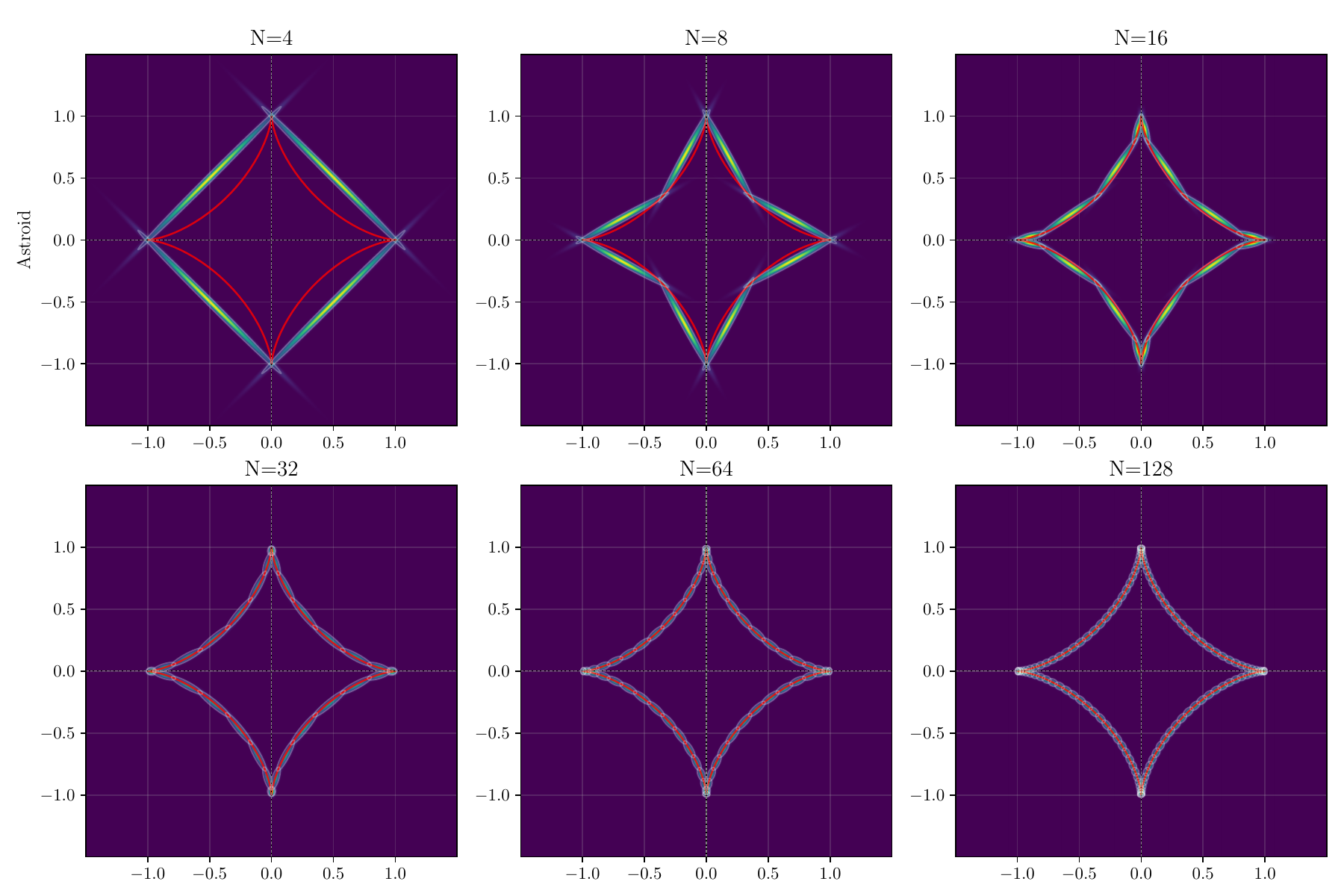}
  \caption{
  Convergence of the \ac{GMM} representation for the astroid, a closed
  algebraic curve with four cusps and parameter $a = 1$, as the number of
  components $N$ increases.
}
  \label{fig:conv-astroid}
\end{figure}

\begin{figure}[h!]
  \centering
  \includegraphics[width=\linewidth]{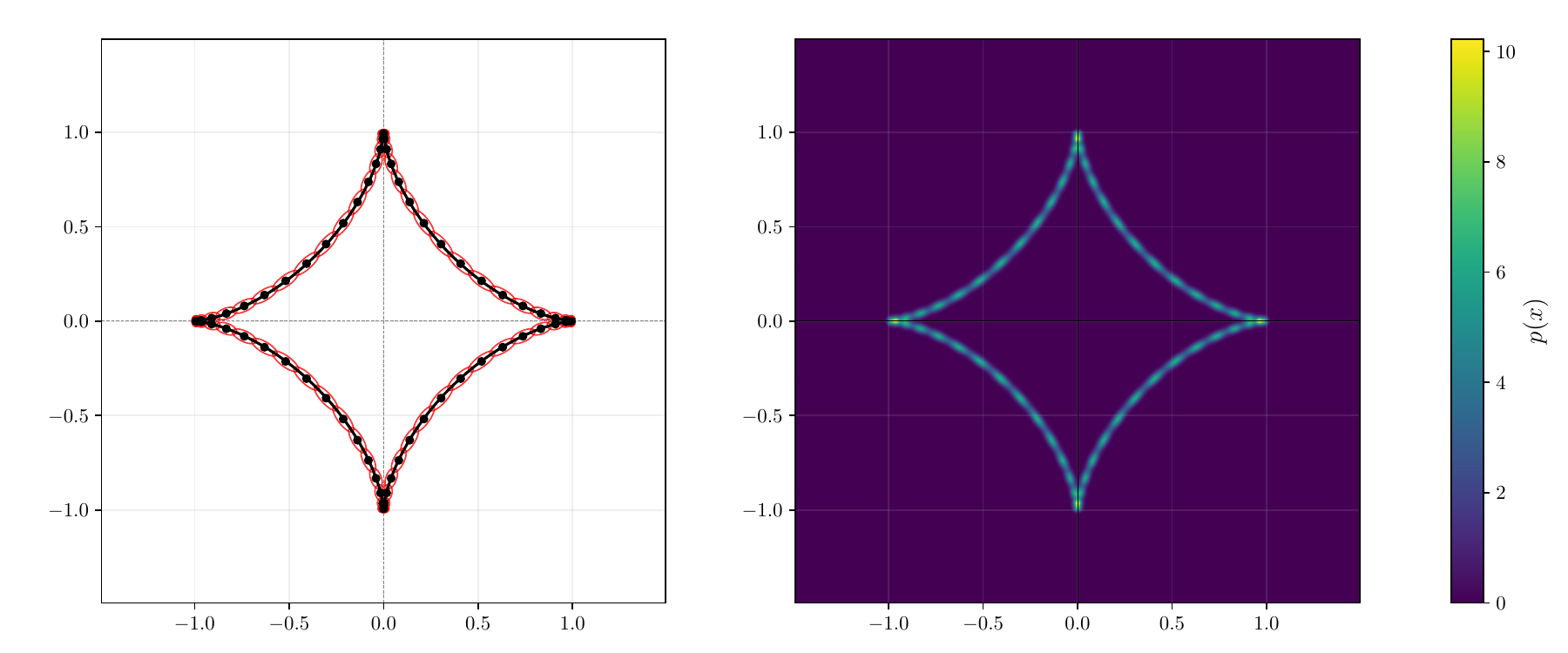}
  \caption{
  Split view for the astroid, a closed algebraic curve with four cusps and
  parameter $a = 1$: cover of the curve by ellipses that are the representatives of the covariances of the Gaussian components of the \ac{GMM} representation
  (left) and the corresponding \ac{GMM} \ac{PDF} heatmap (right).
}
  \label{fig:split-astroid}
\end{figure}
\FloatBarrier
\subsection{Cardioid}
\begin{definition}[Cardioid]
Let $a\in \mathbb{R}_{>0}$.
A cardioid may be parametrized by the mapping
\begin{equation}
  \alpha \colon [0,2\pi) \to \mathbb{R}^2,
  \qquad
  t \mapsto \bigl(a(2\cos t-\cos(2t)),\, a(2\sin t-\sin(2t))\bigr).
\end{equation}
Its derivative is the mapping
\begin{equation}
  \alpha' \colon (0,2\pi) \to \mathbb{R}^2,
  \qquad
  t \mapsto \bigl(2a(\sin(2t)-\sin t),\, 2a(\cos t-\cos(2t))\bigr).
\end{equation}
Using the identities
\begin{equation}
  \sin(2t) = 2\sin t \cos t,
  \qquad
  \cos(2t) = 2\cos^2 t - 1,
\end{equation}
we obtain
\begin{equation}
  \alpha'(0) = (0,0).
\end{equation}
Moreover,
\begin{equation}
  \|\alpha'(t)\|
  =
  4a\left|\sin\!\left(\frac{t}{2}\right)\right|.
\end{equation}
Hence the curve is regular for all $t \in (0,2\pi]$, and it fails to be regular
at $t=0$, which corresponds to the cusp of the cardioid. Thus, the cardioid is a closed planar curve with a single cusp singularity.
\end{definition}

\begin{remark}[Why this example]
The cardioid is a classical closed curve with a single cusp. It bridges
between smooth closed curves and multi cusp curves and shows how the mixture
handles a closed shape dominated by one singular point.
\end{remark}

\begin{figure}[h!]
  \centering
  \includegraphics[width=\linewidth]{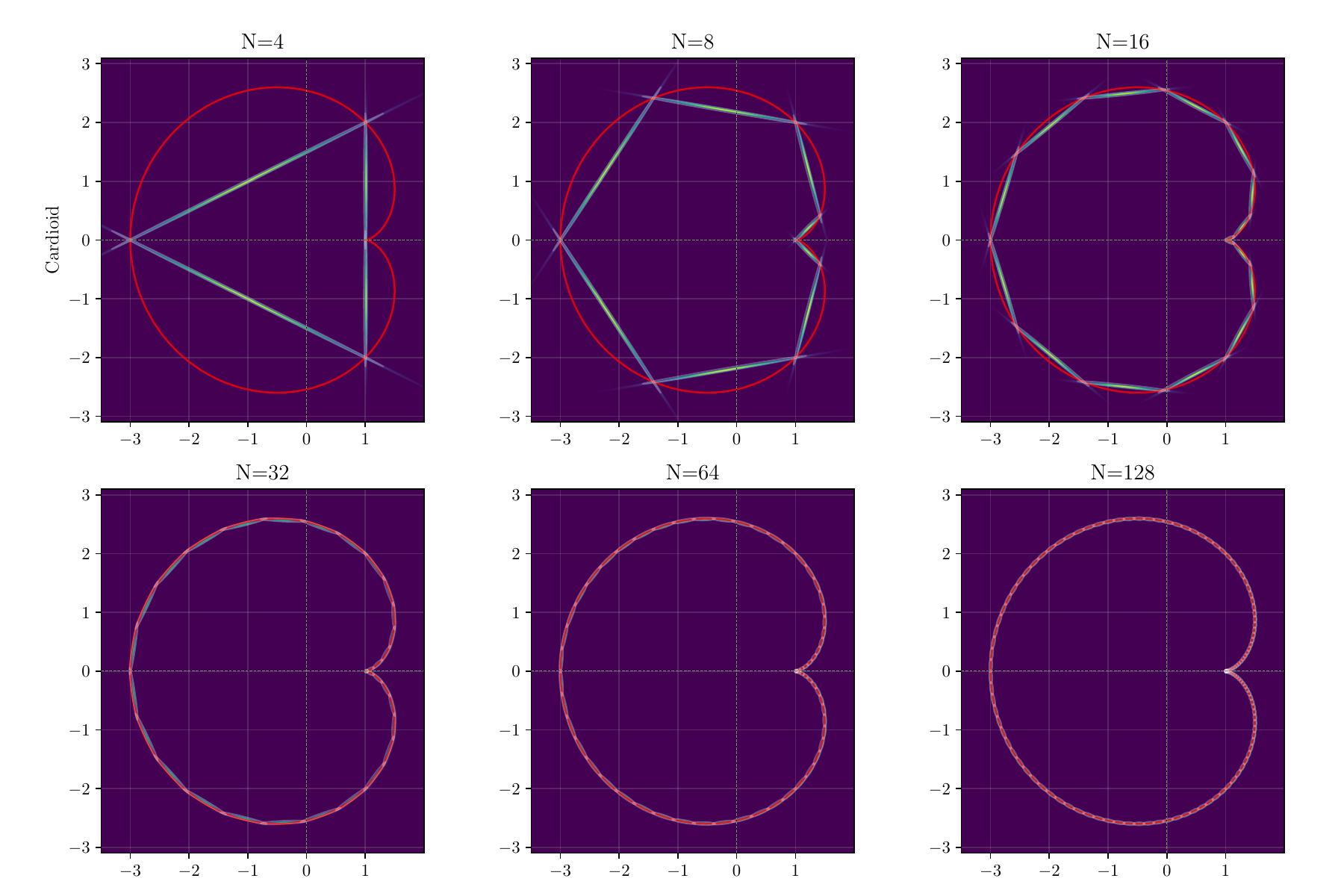}
  \caption{
  Convergence of the \ac{GMM} representation for the cardioid, a closed
  planar curve with a single cusp and parameter $a = 1$, as the number of
  components $N$ increases.
}
  \label{fig:conv-cardioid}
\end{figure}

\begin{figure}[h!]
  \centering
  \includegraphics[width=\linewidth]{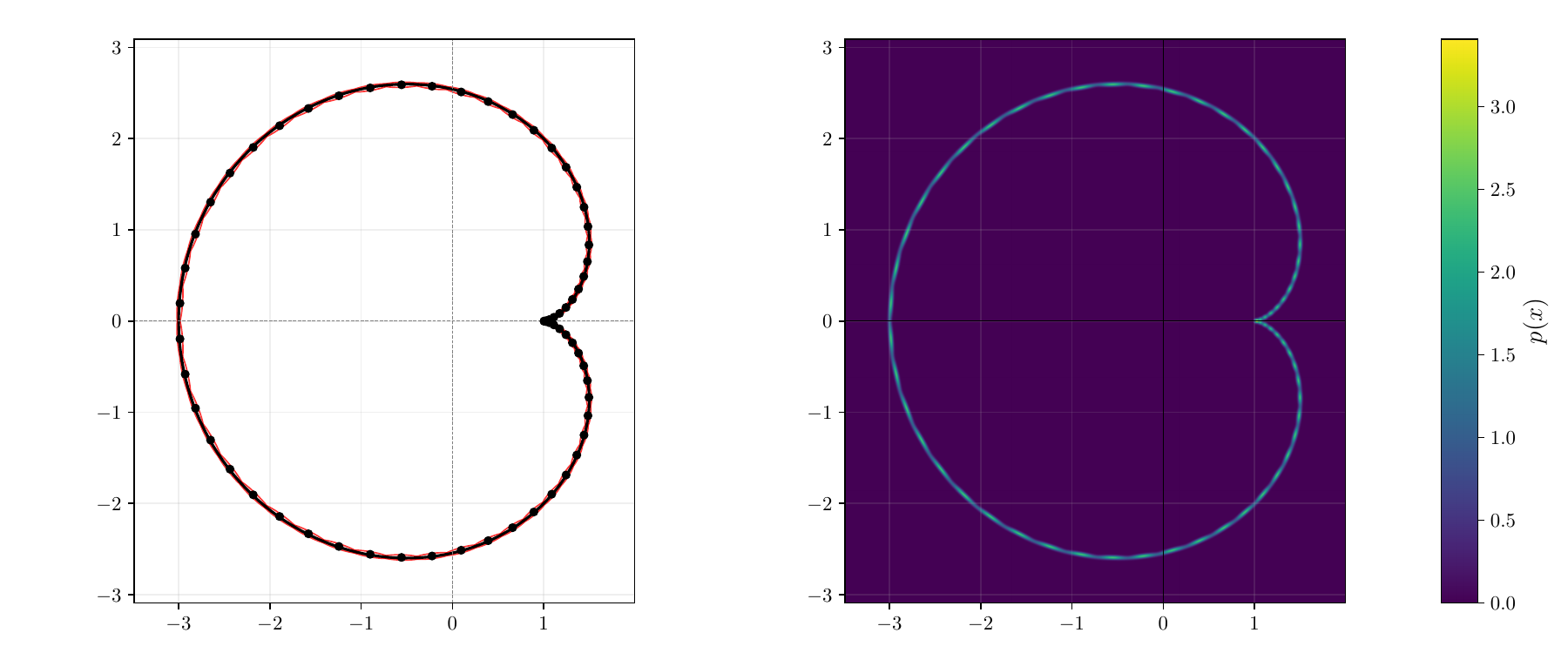}
  \caption{
  Split view for the cardioid, a closed planar curve with a single cusp and
  parameter $a = 1$: cover of the curve by ellipses that are the representatives of the covariances of the Gaussian components of the \ac{GMM} representation
  (left) and the corresponding \ac{GMM} \ac{PDF} heatmap (right).
}
\label{fig:split-cardioid}
\end{figure}
\FloatBarrier
\subsection{Lemniscate}
\begin{definition}[Lemniscate]
Let $a\in \mathbb{R}_{>0}$.
A convenient parametrization of a lemniscate type curve is the mapping
\begin{equation}
  \alpha \colon [0,2\pi) \to \mathbb{R}^2,
  \qquad
  t \mapsto \bigl(a\cos t,\, a\sin t\cos t\bigr).
\end{equation}
Its derivative is the mapping
\begin{equation}
  \alpha' \colon (0,2\pi) \to \mathbb{R}^2,
  \qquad
  t \mapsto \bigl(-a\sin t,\, a\cos(2t)\bigr).
\end{equation}
Therefore
\begin{equation}
  \|\alpha'(t)\|
  =
  a\sqrt{\sin^2 t + \cos^2(2t)}.
\end{equation}
The curve is smooth on $(0,2\pi)$. Moreover, since $\sin t$ and $\cos(2t)$ do
not vanish simultaneously, it is regular on $(0,2\pi)$.
Furthermore,
\begin{equation}
  \alpha\!\left(\frac{\pi}{2}\right)
  =
  \alpha\!\left(\frac{3\pi}{2}\right)
  =
  (0,0),
\end{equation}
so the image has a infinity shape structure with a self intersection at the origin.
\end{definition}

\begin{remark}[Why this example]
The lemniscate type curves introduce a self intersection: different branches pass
through the same point with different tangents. It shows how the mixture
represents multiple local directions via overlapping components at the
intersection.
\end{remark}

\begin{figure}[h!]
  \centering
  \includegraphics[width=\linewidth]{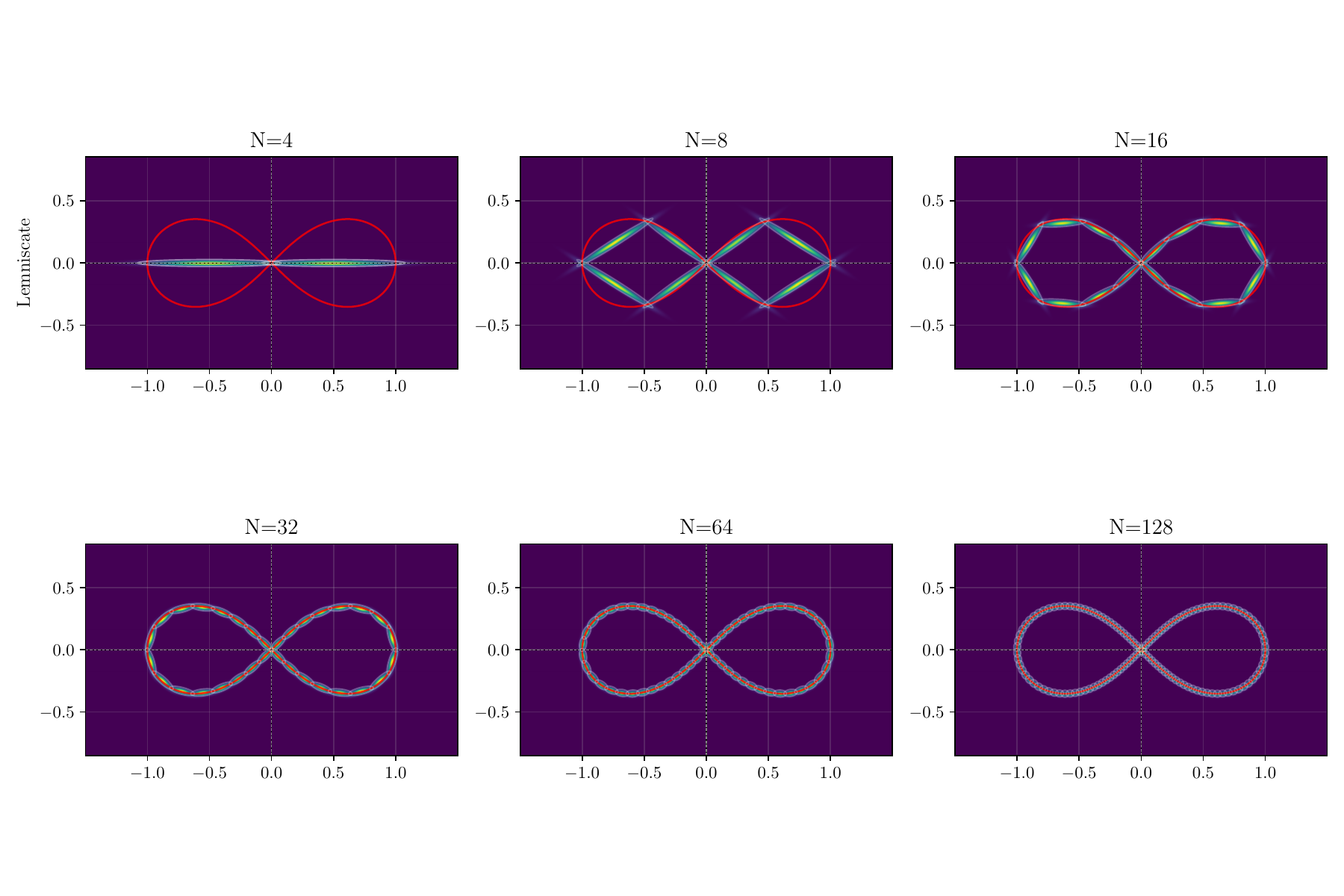}
  \caption{
  Convergence of the \ac{GMM} representation for the lemniscate, a
  smooth regular infinity shaped closed curve with a self intersection and with the parameter $a = 1$, as the
  number of components $N$ increases.
}
  \label{fig:conv-lemniscate}
\end{figure}

\begin{figure}[h!]
  \centering
  \includegraphics[width=\linewidth]{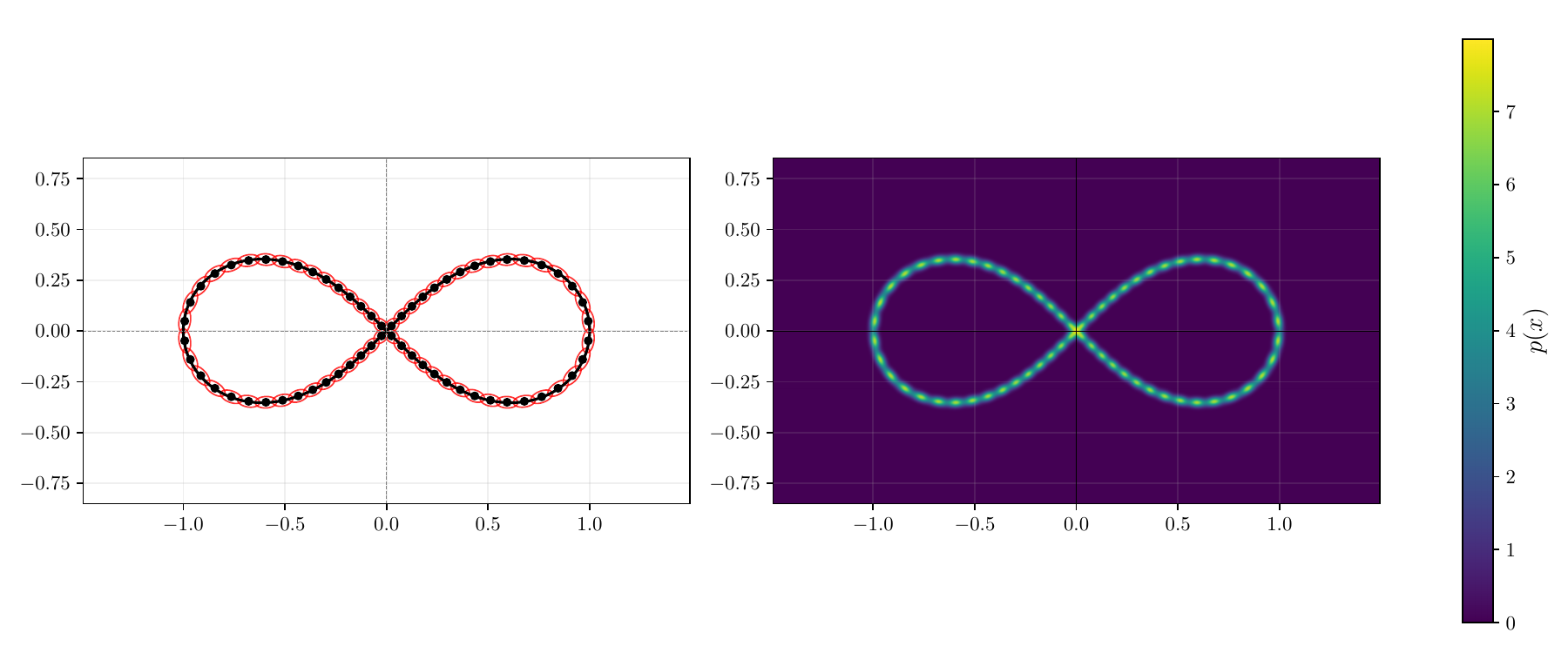}
  \caption{
  Split view for the lemniscate, a smooth regular infinity shaped closed
  curve with a self intersection and with the parameter $a = 1$: cover of the curve by ellipses that are the representatives of the covariances of the Gaussian components of the \ac{GMM} representation
  (left) and the corresponding \ac{GMM} \ac{PDF} heatmap (right).
}
  \label{fig:split-lemniscate}
\end{figure}

\FloatBarrier
\subsection{Rose}
\begin{definition}[Rose]
Let $a \in \mathbb{R}_{>0}$ and let $k \in \mathbb{Z} \setminus \{0\}$.
The (parametrized) rose curve with parameter $k$ is the mapping
\begin{equation}
  \alpha \colon [0,2\pi) \to \mathbb{R}^2,
  \qquad
  t \mapsto \bigl(a\cos(kt)\cos t,\, a\cos(kt)\sin t\bigr).
\end{equation}
Its derivative is the mapping
\begin{equation}
  \alpha' \colon (0,2\pi) \to \mathbb{R}^2,
  \qquad
  t \mapsto
  \bigl(
    a(-k\sin(kt)\cos t-\cos(kt)\sin t),\,
    a(-k\sin(kt)\sin t+\cos(kt)\cos t)
  \bigr).
\end{equation}
Therefore,
\begin{equation}
  \|\alpha'(t)\|
  =
  a\sqrt{k^2\sin^2(kt)+\cos^2(kt)}.
\end{equation}
Since $k^2\sin^2(kt)+\cos^2(kt) > 0$ for all $t\in[0,2\pi)$, the curve is smooth
and regular on $(0,2\pi)$.

The image of $\alpha$ is the classical rose curve. Its number of petals is:
\begin{itemize}
  \item $\vert k \vert$ petals if $k$ is odd;
  \item $2\vert k\vert$ petals if $k$ is even.
\end{itemize}
\end{definition}

\begin{remark}[Why this example]
Rose curves are smooth closed curves with multiple petals and repeated
structure. They test the ability of the mixture to follow many turns and
symmetries and to represent repeated passages at the origin.
\end{remark}

\begin{figure}[h!]
  \centering
  \includegraphics[width=\linewidth]{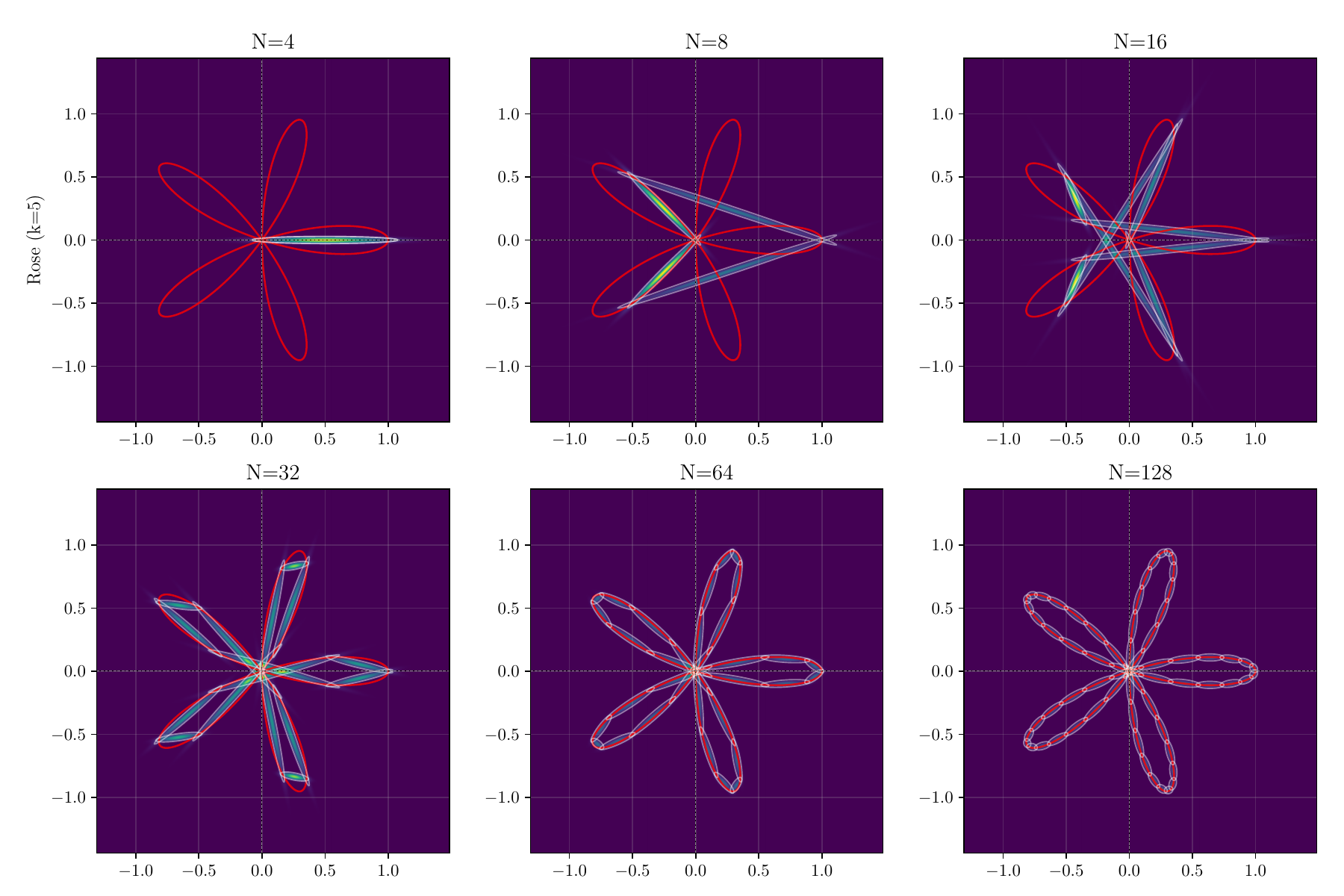}
  \caption{
  Convergence of the \ac{GMM} representation for the rose curve with
  parameters $a = 1$ and $k = 5$, a smooth regular closed curve with multiple self intersections, as
  the number of components $N$ increases.
}
  \label{fig:conv-rose-k5}
\end{figure}

\begin{figure}[h!]
  \centering
  \includegraphics[width=\linewidth]{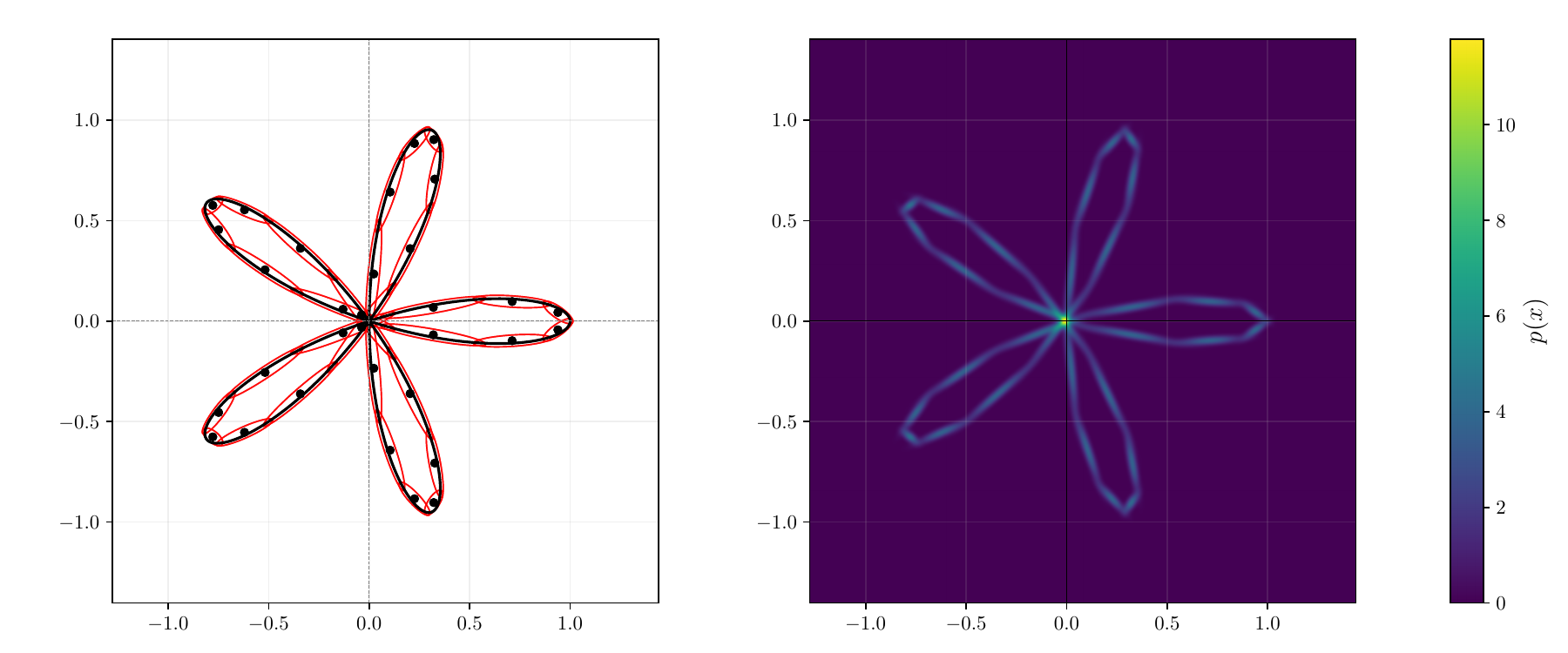}
  \caption{
  Split view for the rose curve with parameters $a = 1$ and $k = 5$, a smooth regular closed curve with multiple self intersections: cover of the curve by ellipses that are the representatives of the covariances of the Gaussian components of the \ac{GMM} representation
  (left) and the corresponding \ac{GMM} \ac{PDF} heatmap (right).
}
  \label{fig:split-rose-k5}
\end{figure}
\FloatBarrier
\subsection{Yin-Yang Curve}
\begin{definition}[Yin-Yang curve]
A simple analytic Yin-Yang type configuration may be described by an outer
circle, an inner divider curve formed by two semicircular arcs, and two small
``eye'' circles, as illustrated in Figure~\ref{fig:split-yinyang}.

The outer circle of radius one centered at the origin can be constructed by the curve
\begin{equation}
  \alpha_{\mathrm{out}} \colon [0,2\pi] \to \mathbb{R}^2,
  \qquad
  t \mapsto (\cos t,\, \sin t).
\end{equation}

The divider between the two halves of the Yin-Yang symbol is realized by two
semicircles of radius $\tfrac12$, one centered at $(0,\tfrac12)$ and one at
$(0,-\tfrac12)$, glued together at their endpoints on the outer circle.

The upper inner semicircle (center $(0,\tfrac12)$, radius $\tfrac12$) can be constructed by the curve
\begin{equation}
  \alpha_{+} \colon \left(\frac{\pi}{2},\frac{3\pi}{2}\right] \to \mathbb{R}^2,
  \qquad
  t \mapsto \left(\frac{1}{2}\cos t,\, \frac{1}{2} + \frac{1}{2}\sin t\right).
\end{equation}
The lower inner semicircle (center $(0,-\tfrac12)$, radius $\tfrac12$) can be constructed by the curve
\begin{equation}
  \alpha_{-} \colon \left(-\frac{\pi}{2},\frac{\pi}{2}\right) \to \mathbb{R}^2,
  \qquad
  t \mapsto \left(\frac{1}{2}\cos t,\, -\frac{1}{2} + \frac{1}{2}\sin t\right).
\end{equation}

The two ``eyes'' are small circles of radius $r_{\mathrm{eye}}>0$, centered at
the midpoints of the inner semicircles can be constructed by the curves
\begin{align}
  \alpha_{\mathrm{eye},+} &\colon [0,2\pi) \to \mathbb{R}^2,
  &
  t &\mapsto \bigl(r_{\mathrm{eye}}\cos t,\,
                   \tfrac12 + r_{\mathrm{eye}}\sin t\bigr),
  \\
  \alpha_{\mathrm{eye},-} &\colon [0,2\pi) \to \mathbb{R}^2,
  &
  t &\mapsto \bigl(r_{\mathrm{eye}}\cos t,\,
                  -\tfrac12 + r_{\mathrm{eye}}\sin t\bigr).
\end{align}
In the configuration shown in Figure~\ref{fig:split-yinyang}, the eye radius is
chosen such that the eyes lie strictly inside the corresponding semicircles
(i.e.\ $0<r_{\mathrm{eye}}<\tfrac12$).

The complete Yin-Yang type boundary is the union of the boundaries of the outer circle, the upper and the lower semicircles, and the two eyes:
\begin{equation}
  \Gamma
  =
  \Gamma_{\alpha_{\mathrm{out}}}
  \cup
  \Gamma_{\alpha_-}
  \cup
  \Gamma_{\alpha_+}
  \cup
  \Gamma_{\alpha_{\mathrm{eye},+}}
  \cup
  \Gamma_{\alpha_{\mathrm{eye},-}}.
\end{equation}
\end{definition}

\begin{remark}[Why this example]
The Yin-Yang symbol combines several circular arcs into a compound
boundary with nontrivial topology. It shows how multiple components can be
combined into a globally recognizable shape by the \ac{GMM} representation.
\end{remark}

\begin{figure}[h!]
  \centering
  \includegraphics[width=\linewidth]{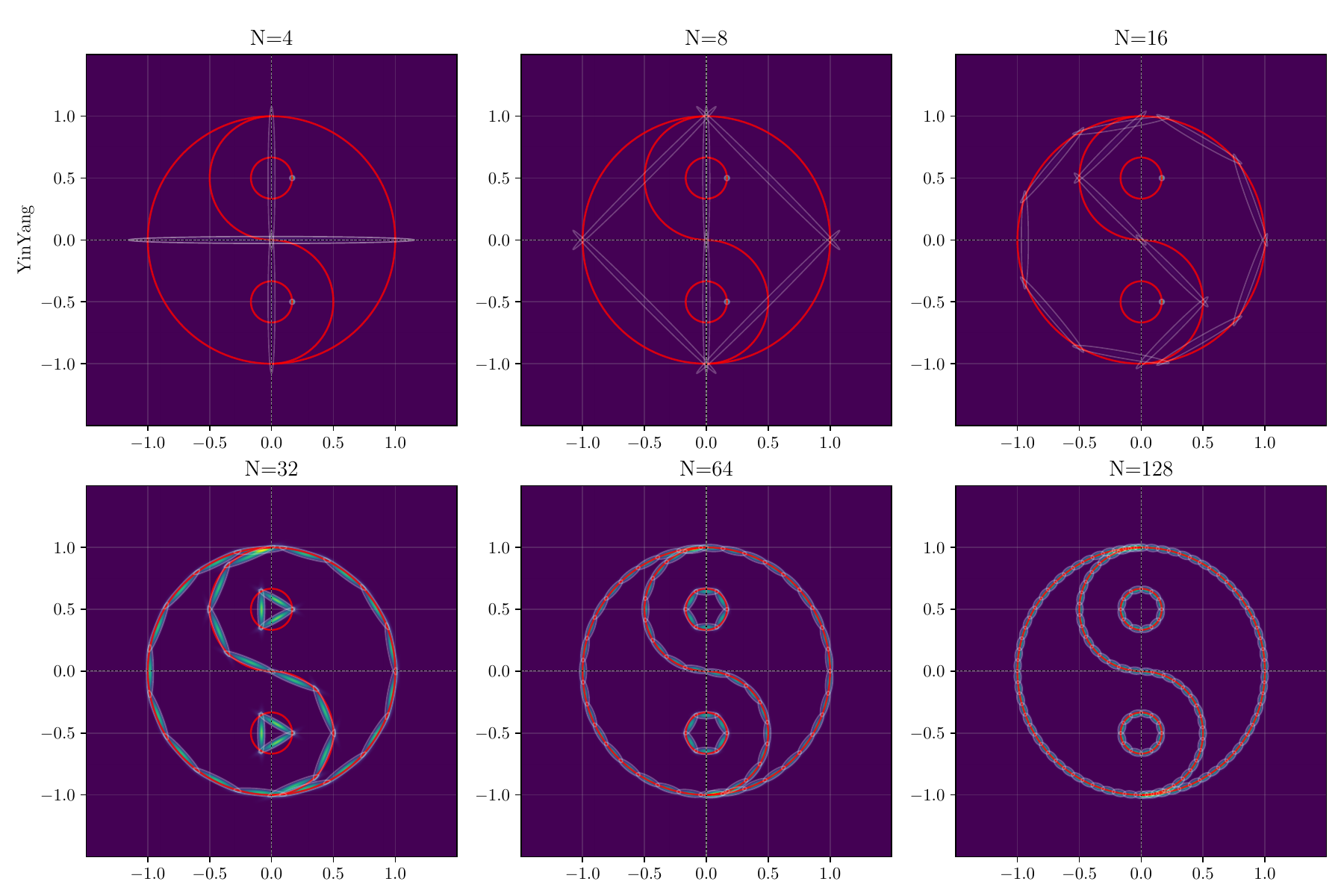}
  \caption{
  Convergence of the \ac{GMM} representation for the Yin-Yang curve
  configuration, a compound boundary composed of circular arcs with outer
  radius one and inner semicircles of radius $1/2$, as the number of components
  $N$ increases.
}
  \label{fig:conv-yinyang}
\end{figure}

\begin{figure}[h!]
  \centering
  \includegraphics[width=\linewidth]{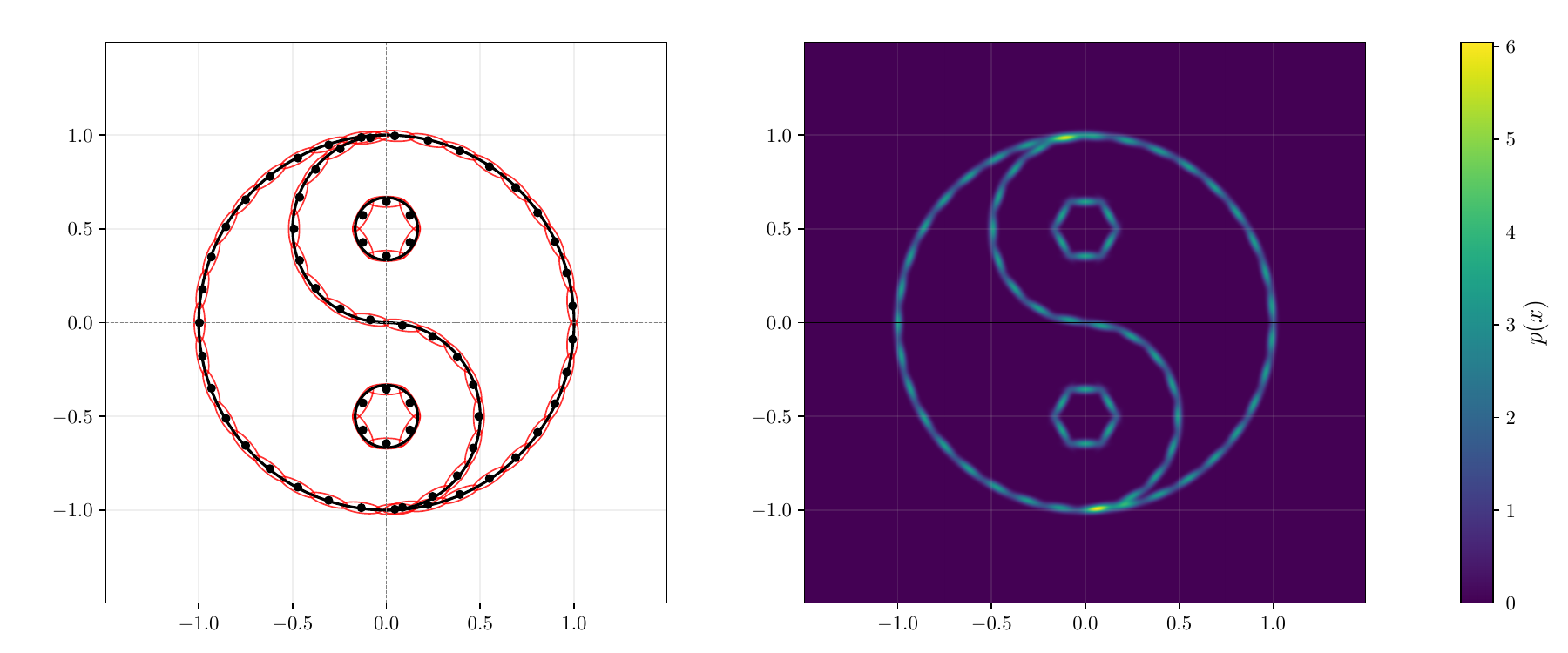}
  \caption{
  Split view for the Yin-Yang curve configuration, a compound boundary
  composed of circular arcs with outer radius $1$ and inner semicircles of
  radius $1/2$: cover of the curve by ellipses that are the representatives of the covariances of the Gaussian components of the \ac{GMM} representation
  (left) and the corresponding \ac{GMM} \ac{PDF} heatmap (right).
}
  \label{fig:split-yinyang}
\end{figure}
\FloatBarrier
\subsection{Lissajous Curve}
\begin{definition}[Lissajous curve]
Let $a,b \in \mathbb{R}_{>0}, p,q\in\mathbb{N}\setminus\{0\}, \delta\in\mathbb{R}$.
A planar Lissajous curve is the mapping
\begin{equation}
  \alpha \colon [0,2\pi) \to \mathbb{R}^2,
  \qquad
  t \mapsto \bigl(a\sin(pt+\delta),\, b\sin(qt)\bigr).
\end{equation}
Its derivative is the mapping
\begin{equation}
  \alpha' \colon (0,2\pi) \to \mathbb{R}^2,
  \qquad
  t \mapsto \bigl(ap\cos(pt+\delta),\, bq\cos(qt)\bigr).
\end{equation}
Therefore
\begin{equation}
  \|\alpha'(t)\|
  =
  \sqrt{a^2p^2\cos^2(pt+\delta) + b^2q^2\cos^2(qt)}.
\end{equation}
Hence the curve is smooth on $(0,2\pi)$ and regular at all parameter values
for which not both $\cos(pt+\delta)$ and $\cos(qt)$ vanish simultaneously.
\end{definition}

In the implementation, we use the special case
\begin{equation}
  a = p,\quad b = q,
\end{equation}
so that both amplitude and frequency in each coordinate are controlled by the same integer parameters. The example used in the figures corresponds to $p=3$, $q=2$, and $\delta=\pi/4$.

\begin{remark}[Why this example]
Lissajous curves are smooth closed curves with rich oscillatory structure and
multiple self intersections. They test how the mixture tracks strongly
oscillatory geometry and multiple crossings in a bounded domain.
\end{remark}

\begin{figure}[h!]
  \centering
  \includegraphics[width=\linewidth]{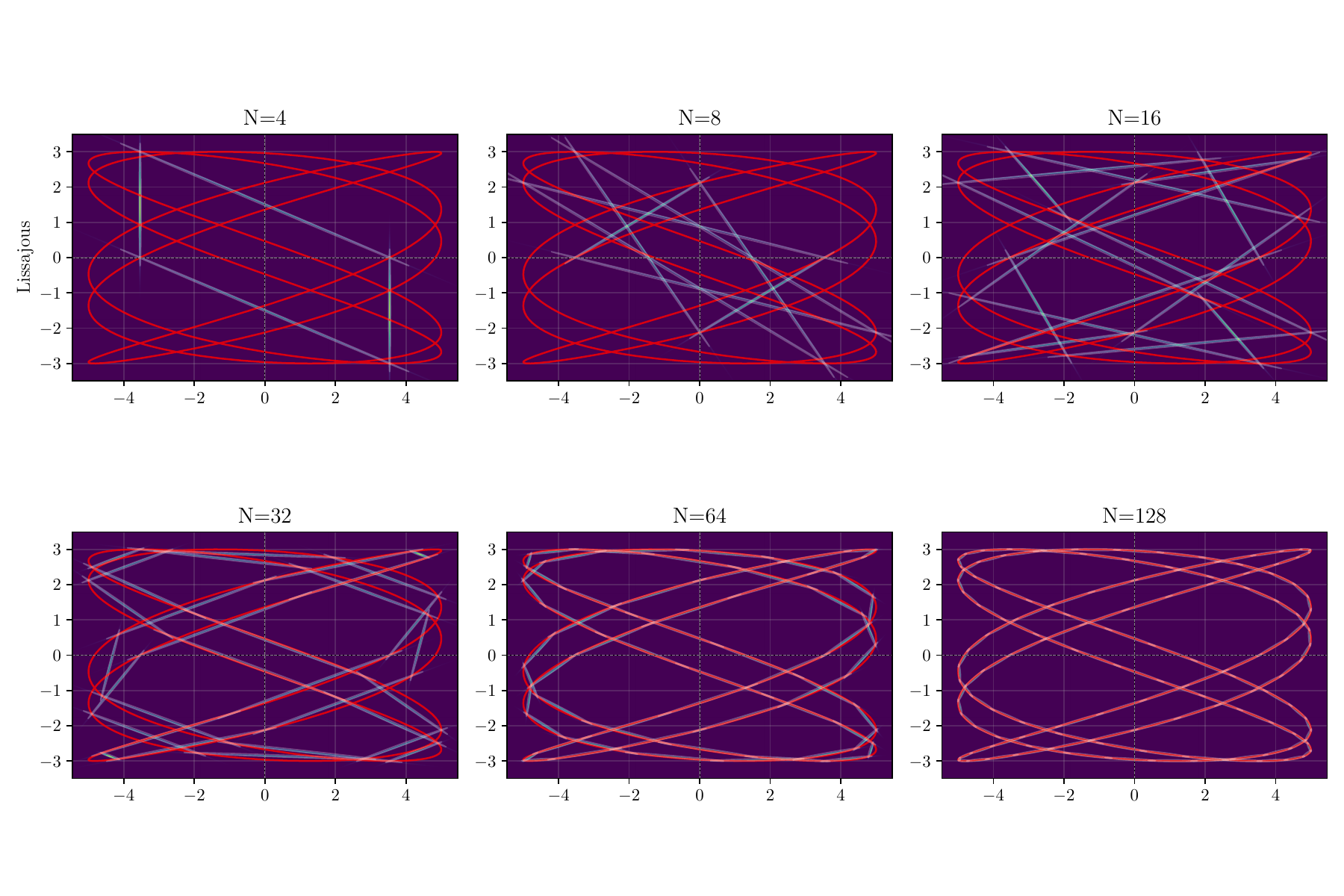}
  \caption{
  Convergence of the \ac{GMM} representation for a Lissajous curve with
  parameters $(a,b,p,q,\delta) = (5,3,5,3,\pi/4)$, a smooth closed curve with
  oscillatory structure and multiple self intersections, as the number of
  components $N$ increases.
}
  \label{fig:conv-lissajous-5-2}
\end{figure}

\begin{figure}[h!]
  \centering
  \includegraphics[width=\linewidth]{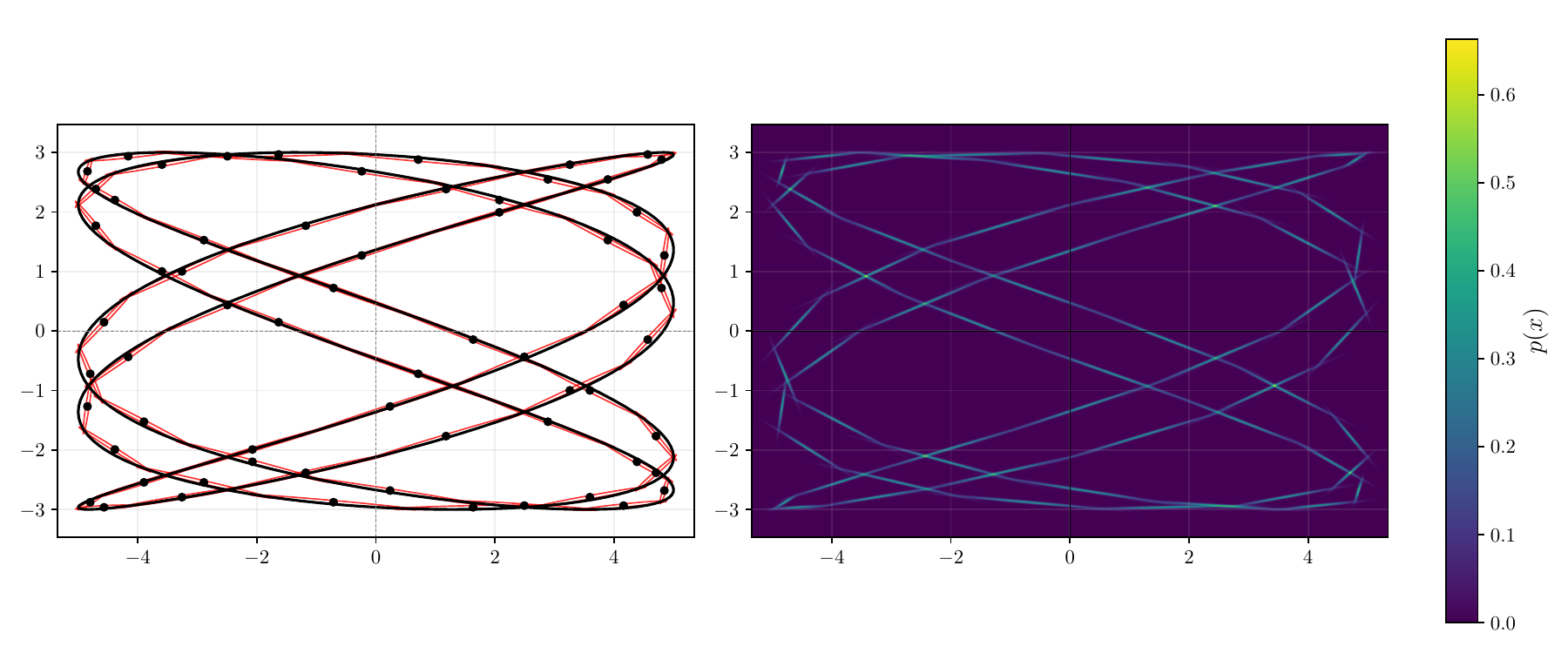}
  \caption{
  Split view for the Lissajous curve with parameters
  $(a,b,p,q,\delta) = (5,3,5,3,\pi/4)$, a smooth closed curve with oscillatory
  structure and multiple self intersections: cover of the curve by ellipses that are the representatives of the covariances of the Gaussian components of the \ac{GMM} representation
  (left) and the corresponding \ac{GMM} \ac{PDF} heatmap (right).
}
  \label{fig:split-lissajous-5-2}
\end{figure}
\FloatBarrier
\subsection{Archimedean Spiral}
\begin{definition}[Archimedean spiral]
Let $a,T \in \mathbb{R}_{>0}$.
The Archimedean spiral is the mapping
\begin{equation}
  \alpha \colon [0,T] \to \mathbb{R}^2,
  \qquad
  t \mapsto \bigl(at\cos t,\, at\sin t\bigr).
\end{equation}
Its derivative is the mapping
\begin{equation}
  \alpha' \colon (0,T \to \mathbb{R}^2,
  \qquad
  t \mapsto \bigl(a(\cos t-t\sin t),\, a(\sin t+t\cos t)\bigr).
\end{equation}
Therefore
\begin{equation}
  \|\alpha'(t)\|=a\sqrt{1+t^2}.
\end{equation}
Hence the curve is smooth and regular on $[0,T]$.
\end{definition}
\begin{remark}[Why this example]
The Archimedean spiral has linearly increasing radius and near constant
spacing between turns. It complements the logarithmic spiral and tests the
mixture on a spiraling trajectory with linearly growing scale.
\end{remark}

\begin{figure}[h!]
  \centering
  \includegraphics[width=\linewidth]{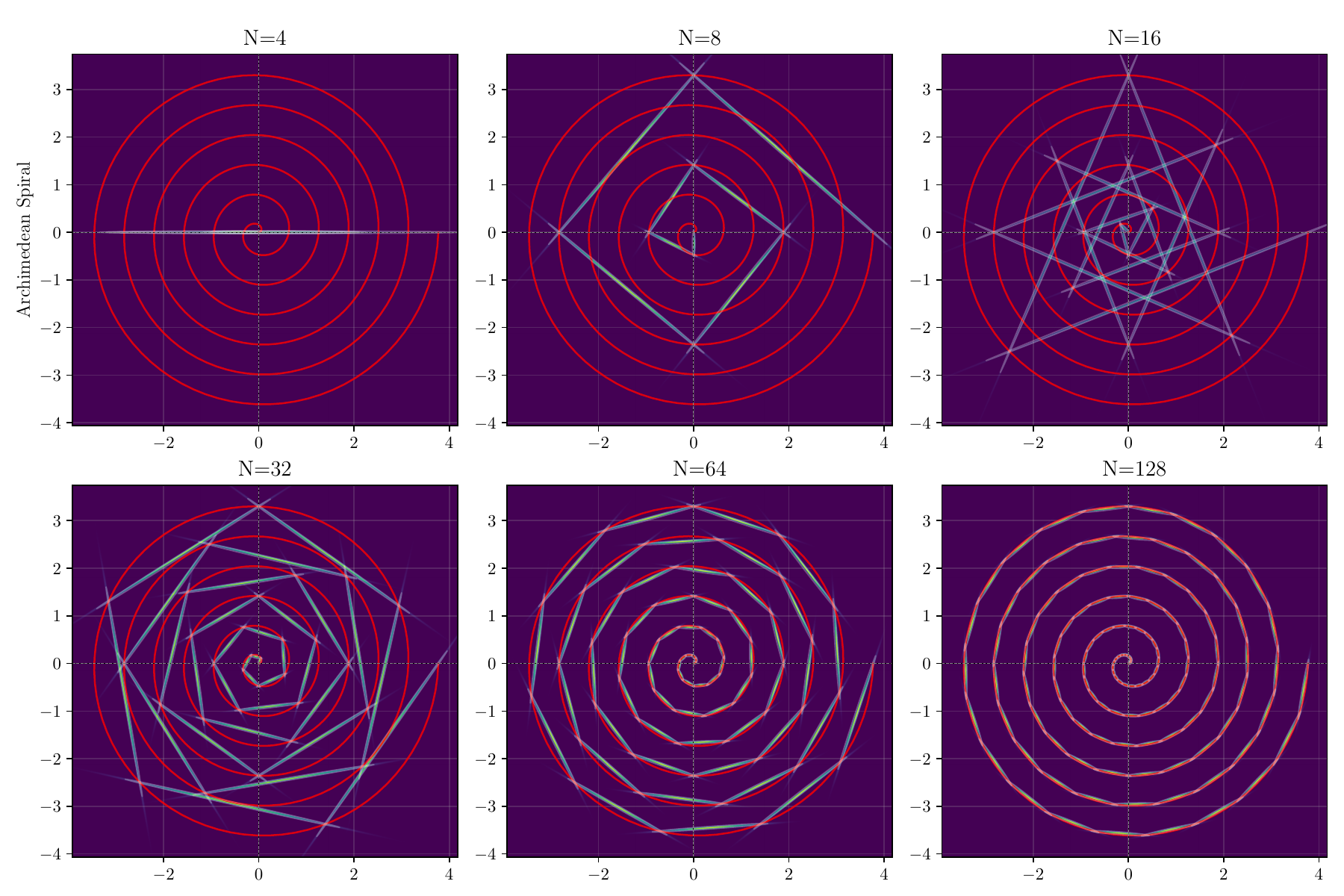}
  \caption{
  Convergence of the \ac{GMM} representation for the Archimedean
  spiral, a smooth regular spiral curve with parameter $a = 0.1$ on
  $I=[0,12\pi]$, as the number of components $N$ increases.
}

  \label{fig:conv-archimedean-spiral}
\end{figure}

\begin{figure}[h!]
  \centering
  \includegraphics[width=\linewidth]{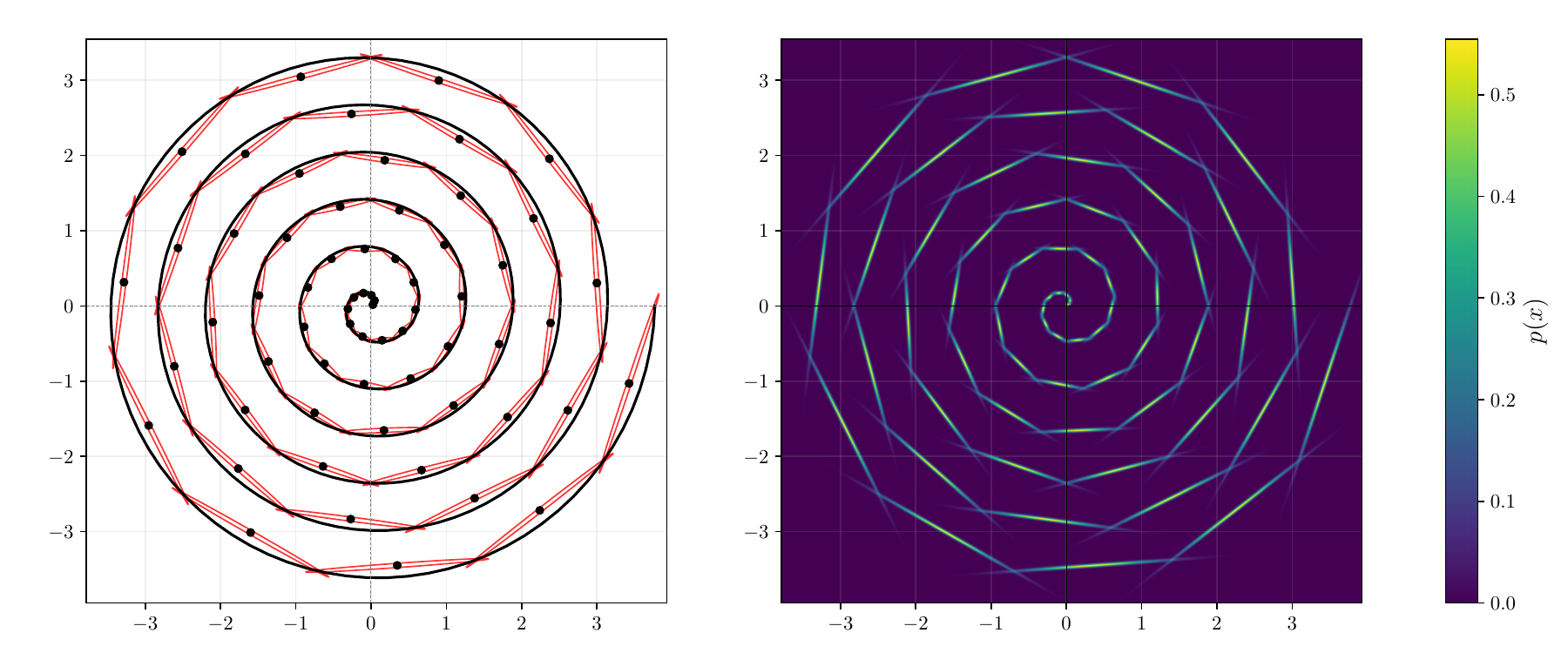}
  \caption{
  Split view for the Archimedean spiral, a smooth regular spiral curve with
  parameter $a = 0.1$ on $[0,12\pi]$: cover of the curve by ellipses that are the representatives of the covariances of the Gaussian components of the \ac{GMM} representation
  (left) and the corresponding \ac{GMM} \ac{PDF} heatmap (right).
}
  \label{fig:split-archimedean-spiral}
\end{figure}
\FloatBarrier
\subsection{Superellipse}
\begin{definition}[Superellipse]
Let $a,b,m \in \mathbb{R}_{>0}$.
A centered superellipse (Lamé curve) is the mapping
\begin{equation}
  \alpha \colon [0,2\pi) \to \mathbb{R}^2,
  \qquad
  t \mapsto
  \bigl(
    a\,\operatorname{sgn}(\cos t)\,|\cos t|^{2/m},
    \,b\,\operatorname{sgn}(\sin t)\,|\sin t|^{2/m}
  \bigr).
\end{equation}
Its derivative, whenever $\sin t \neq 0$ and $\cos t \neq 0$, is given by
\begin{equation}
  \alpha'(t)
  =
  \left(
    -\frac{2a}{m}\,\operatorname{sgn}(\cos t)\,|\cos t|^{2/m-1}\sin t,
    \,\frac{2b}{m}\,\operatorname{sgn}(\sin t)\,|\sin t|^{2/m-1}\cos t
  \right).
\end{equation}
Hence the curve is smooth on every open subinterval of $[0,2\pi)$ on which
$\sin t$ and $\cos t$ do not vanish. On any such subinterval, the curve is also
regular provided $\alpha'(t)\neq(0,0)$.
Its image is a closed curve. For $m=2$, it reduces to an ellipse, while for
other values of $m$ it yields the family of Lamé curves. For $m\ge 1$, the
image is convex.
\end{definition}
\begin{remark}[Why this example]
The superellipse generalizes both rectangles and ellipses within a single
parametric family. For $m=2$ one recovers an ordinary ellipse, while other
values of $m$ yield shapes ranging from rounded rectangles to diamond like
forms. This makes the superellipse a convenient testbed for the proposed
probabilistic polygonal representation when we wish to interpolate between
polygonal like and smooth oval geometries within a single parametric family.
\end{remark}
\begin{figure}[h!]
  \centering
  \includegraphics[width=\linewidth]{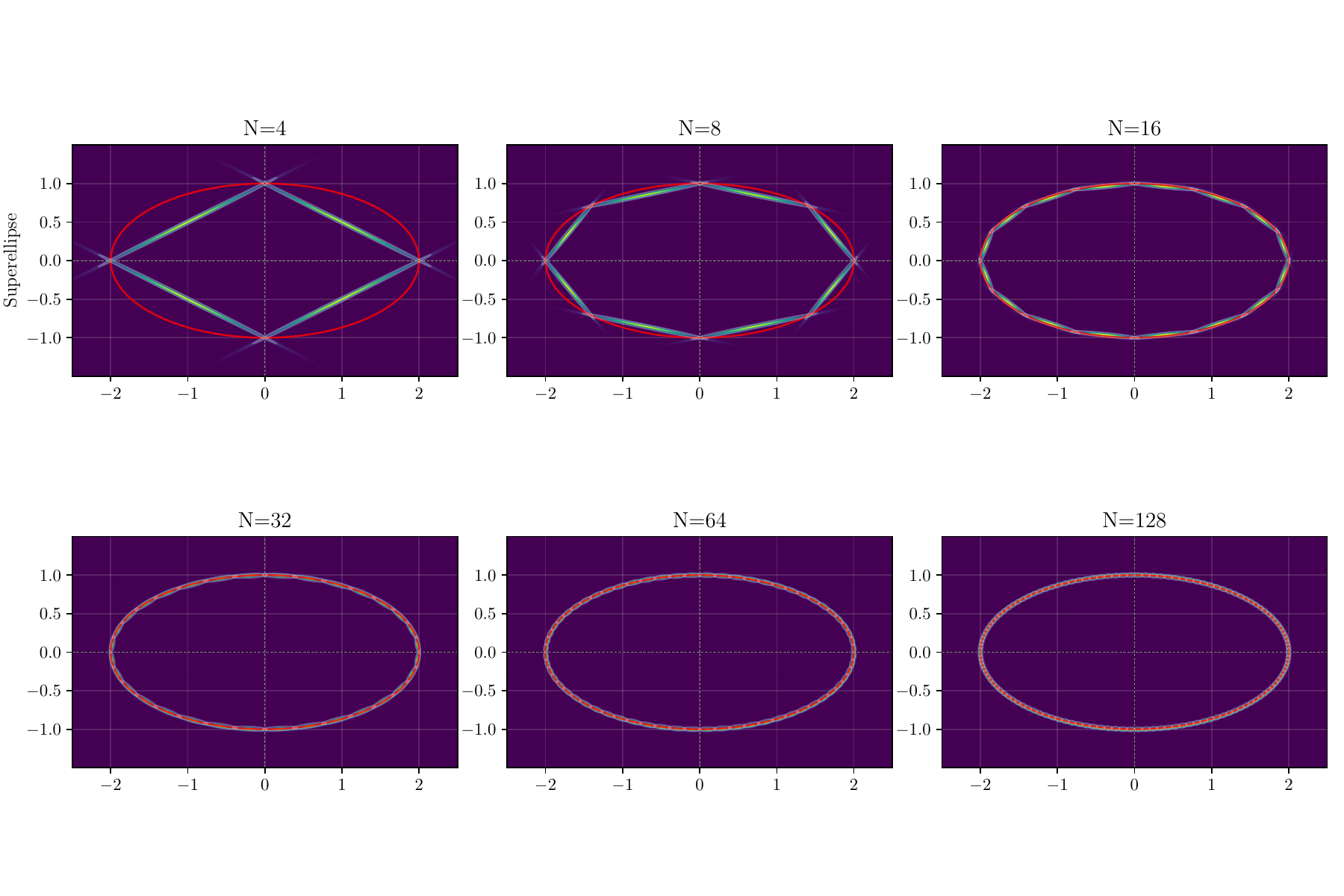}
  \caption{
  Convergence of the \ac{GMM} representation for the superellipse,
  a Lamé curve with parameters $a = 2$, $b = 1$, and $m = 2$, as the number of
  components $N$ increases.
}
  \label{fig:conv-superellipse}
\end{figure}

\begin{figure}[h!]
  \centering
  \includegraphics[width=\linewidth]{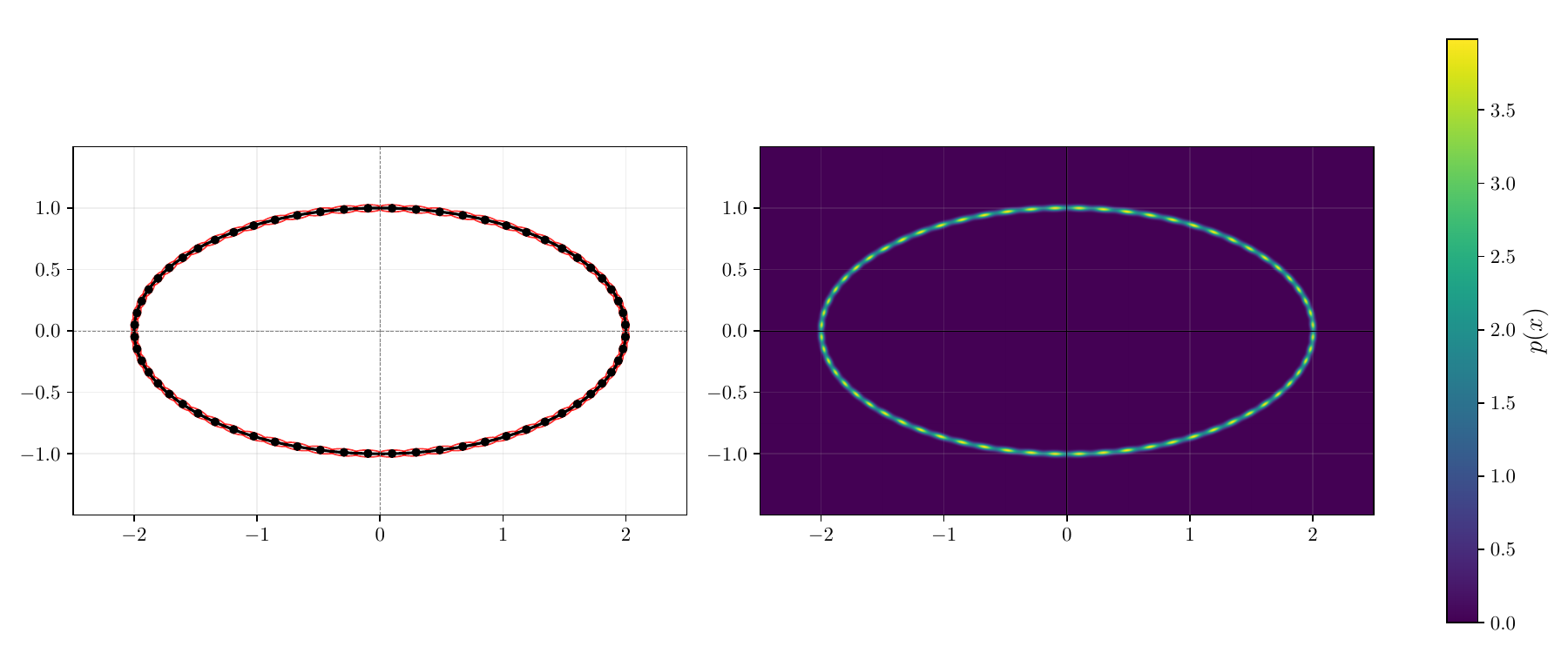}
  \caption{
  Split view for the superellipse, a Lamé curve with parameters
  $a = 2$, $b = 1$, and $m = 2$: cover of the curve by ellipses that are the representatives of the covariances of the Gaussian components of the \ac{GMM} representation
  (left) and the corresponding \ac{GMM} \ac{PDF} heatmap (right).
}
  \label{fig:split-superellipse}
\end{figure}
\FloatBarrier
\newpage

\end{document}